\DeclareMathOperator*{\argmax}{arg\,max}
\DeclareMathOperator*{\argmin}{arg\,min}
\newcommand*{\rom}[1]{\expandafter\@slowromancap\romannumeral #1@}
\newcommand{\R}{\mathbb{R}}
\newcommand{\Z}{\mathbb{Z}}
\newcommand{\T}{\mathbb{T}}
\newcommand{\N}{\mathbb{N}}
\newcommand{\wN}{\mathcal{N}_w}
\newcommand{\diag}{\mathrm{diag}}
\newcommand{\n}{\phantom{0}}
\newcommand{\dx}{\,\mathrm{d}}
\newcommand{\uproman}[1]{\uppercase\expandafter{\romannumeral#1}}
\DeclarePairedDelimiter\ceil{\lceil}{\rceil}
\newcommand{\sfract}[2]{{\raisebox{.2em}{$#1$}\left/\raisebox{-.2em}{$#2$}\right.}}
\renewcommand\norm[1]{\left\lVert#1\right\rVert}
\newcommand\independent{\protect\mathpalette{\protect\independenT}{\perp}}
\def\independenT#1#2{\mathrel{\rlap{$#1#2$}\mkern2mu{#1#2}}}
\theoremstyle{plain}
\newtheorem{lemma}{Lemma}[section]
\newtheorem{theorem}[lemma]{Theorem}
\newtheorem{proposition}[lemma]{Proposition}
\newtheorem{remark}[lemma]{Remark}
\theoremstyle{definition}
\newtheorem{definition}[lemma]{Definition}
\newtheorem{example}[lemma]{Example}
\begin{document}
\title{Learning Sparse Mixture Models}

\author{
Fatima Antarou Ba\footnotemark[1]
}

\maketitle

\date{\today}

\footnotetext[1]{
TU Berlin,
Stra{\ss}e des 17. Juni 136, 
D-10587 Berlin, Germany,
\{fatimaba\}@math.tu-berlin.de.
} 
\begin{abstract}
This work approximates high-dimensional density functions with an ANOVA-like sparse structure by the mixture of wrapped Gaussian and von Mises distributions. When the dimension $d$ is very large, it is complex and impossible to train the model parameters by the usually known learning algorithms due to the curse of dimensionality. Therefore, assuming that each component of the model depends on an a priori unknown much smaller number of variables than the space dimension $d,$ we first define an algorithm that determines the mixture model's set of active variables by the  Kolmogorov-Smirnov and correlation test. Then restricting the learning procedure to the set of active variables, we iteratively determine the set of variable interactions of the marginal density function and simultaneously learn the parameters by the Kolmogorov and correlation coefficient statistic test and the proximal Expectation-Maximization algorithm. The learning procedure considerably reduces the algorithm's complexity for the input dimension $d$ and increases the model's accuracy for the given samples, as the numerical examples show.
\end{abstract}
\tableofcontents
\section{Introduction}
One of the most recurrent problem of multivariate function approximation theory problems is the curse of dimensionality. An algorithm is said to face the curse of dimensionality if the algorithm depends exponentially on the dimension of the data. In order to circumvent or solve the problem, several authors have focused on the study of sparse functions with respect to their arguments. A widespread example appears in compressive sensing~\cite{Tran2020MultilinearCL}, where the target function can be spanned precisely by assuming that the input vector is sparse with respect to the $\norm{\cdot}_0$-norm. Another theory assumes that the target function can be decomposed into a sum or product of much smaller dimensional functions~\cite{Pereyra1973, BeylkinMohlenkamp2005,  Grelier2019LearningHP}. In the specific case of ANOVA decomposition~\cite{Owen2019}, it is assumed that only a minimal number of ANOVA terms whose dimension is minimal compared to $d$ are relevant. This implies that the function to be approximated can be factorized into a sum of functions that depend on only a limited number of variables~\cite{Potts2019, MARA2012115}, i.e., only a certain number of variables interact with each other. Thus the notion of superposition dimension and truncation dimension~\cite{Tran2020MultilinearCL} was introduced to penalize the number of ANOVA terms (equal to $2^d$) and, in addition, the dimension of each of them. Several fruitful pieces of research have been done in this sense, as in regression problems \cite{ANOVApotts2021, MARA2012115} and density function approximations \cite{hertrich2021sparse, HertrichPCA2021, BEYLKIN201994}. Therefore, we introduced finite sparse mixtures models which are inspired by the ANOVA decomposition of sparse functions. Indeed we assume that each mixture component may only depend on a smaller of variables interaction than the space dimension $d$.
\subsection{Prior work}
Given $\set{x^n}_n$ data of a multivariate random variable $X\in \T^d$ of potentially very large dimension, the objective of our work is to approximate the density function $f$ through a mixture of wrapped Gaussian or von Mises distribution models. One of the best known methods is Expectation-Maximization which maximizes the likelihood of the data. 
It should be noted that in the case where the dimension is high it is impossible to apply the algorithm naively without prior knowledge of the sparsity of the density function $f$. Thus, in a previous paper~\cite{hertrich2021sparse} we tried to take into account the sparsity assumption of the density function $f$ of the mixture model. The algorithm proved to be very effective in approximating periodic B-splines, the first Friedman function and in image classification. 
\subsection{Our contribution}
This current paper is an extension of our previous work "Sparse ANOVA Inspired Mixture Models" \cite{hertrich2021sparse}. In particular we deal with improvement of learning algorithm by first determining the active variables of the density function. Then we restricted the study to the set of active variables. This approach is even more efficient if we assume that some variables do not play any role in the approximation of $f$. This considerably reduces the computational time and space. Thus we will assign masses to the variables according to the amount of information they contain. Thus it is possible to obtain an accurate approximate the density function by its marginal which contains the variables with the most information.

\subsection{Outline of paper}

Section\ref{sec:notation} introduced the notation. In section~\ref{sec:SPAMM}, we have introduced a sparse mixture model from the parametric family of multivariate wrapped Gaussian and von Mises distribution. Furthermore, we have derived the marginal and the conditional density function of the wrapped Gaussian Distribution, which will later help us approximate the target density function iteratively.

In section~\ref{sec:learning}, we have implemented an algorithm that determines the set of active variables of a sparse mixture model by the Kolmogorov-Smirnov and correlation coefficient 
test. Therefore the model learning can be restricted to active variables set $\mathcal{A}$, which will considerably reduce the complexity of the model training if we assume that $\abs{\mathcal{A}} \ll d$. 

In section~\ref{sec:learning}, we will define an Algorithm that will iteratively estimate the set of interacting variables and the parameters of the marginal density function as well. Later in section~\ref{sec:numerics}, we will test our model on sparse mixtures of wrapped Gaussian, B-splines function, and the California Housing prices data. 
\section{Preliminaries and notation}\label{sec:notation}


\section{Sparse Mixture Models} \label{sec:SPAMM}

\subsection{Sparse additive  Model}\label{sec:general}


Under similar assumption as \cite[section~2]{hertrich2021sparse} , we try to approximate the density function $g$ of an unknown distribution given a finite number of weighted samples $\mathcal{X}=\set{\left(x^n, w_n\right)}_{n \in [N]}$ by a finite dimensional sparse mixture model, whose probability density function (pdf) is given by  
\begin{equation}\label{eq:sparse_density_function}
	f(x\mid \alpha, \theta)= \sum_{k=1}^K \alpha_{u_k} p(x_{u_k} \mid \theta_{u_k})
\end{equation} 
where $u_k \in U \subset \mathcal P([d]), \alpha=(\alpha_{u_k})_{k=1,\ldots, K} \in \Delta_{K}, \theta=(\theta_{u_k})_{u_k\in U}$ and $p(\cdot \mid \theta_{u_k})$ is a probability density function with $\abs{u_k}$-dimensional parameter $\theta_{u_k}$. We will here consider samples $(x_n)_{n\in N}$ which are equally weighted, i.e $w_n=1$ for all $n\in N$.
Similarly to \cite{hertrich2021sparse}  we will also assume that the index set $u_k$ may not be pairwise different, i.e there may exist $k,t \in [K],$ such that $u_t =u_k$ but $t\neq k$. Thus denotes by $K_u$ the number of mixture components $p(\cdot \mid \theta_{u_k})$ such that $u=u_k, k=1,\ldots, K.$ Mixture models, whose density function has the form \eqref{eq:sparse_density_function} are called \textit{sparse mixture model} (\textit{sparse MM}).
The parametric family of sparse wrapped Gaussian distribution with both diagonal and full covariance matrix on one side and the family of sparse von Mises distribution on the other will be used to approximate the unknown target density function. 
Recall that the wrapped Gaussian distribution is obtained by wrapping the Gaussian distribution around the torus. Indeed if $Y$ is a Gaussian distributed random variable (RV), the corresponding wrapped Gaussian RV $X$ is given by $X=Y\text{mod } T=Y-T\ceil{\frac{Y}{T}},$ where $T>0$ denotes the period. Since we are interested in approximating $1$-periodic functions on the unit torus, then the wrapped random variable becomes $X=Y-\ceil{Y}.$ There pdf are defined as
\begin{equation}\label{eq:defwrappedpdffull}
	p_G\left(x_{u_k}\mid \theta_{u_k}= (\mu_{u_k}, \Sigma_{u_k})\right) = \sum_{l \in \Z^{\abs{u_k}}} \mathcal{N}(x_{u_k}+l \mid \mu_{u_k}, \Sigma_{u_k}) = \wN(x_{u_k}\mid \mu_{u_k}, \Sigma_{u_k}),
\end{equation}   
where ($\wN$) $\mathcal{N}$ denotes the pdf of the $\abs{u_k}$-dimensional (wrapped) Gaussian distribution with mean $\mu_{u_k} \in \T^{\abs{u_k}}$ and symmetric positive definite (SPD) covariance matrix $\Sigma_{u_k} \coloneqq \Sigma_{u_ku_k} \in \R^{\abs{u_k} \times \abs{u_k}}.$
If the wrapped Gaussian distribution has a diagonal covariance matrix $\Sigma = \diag(\Sigma_1, \ldots, \Sigma_{\abs{u_k}})$ then its  pdf is simplified to a product of univariate wrapped Gaussian density function as
\begin{equation}\label{eq:defwrappedpdfdiag}
	p_G\left(x_{u_k}\mid \theta_{u_k}= (\mu_{u_k}, \Sigma_{u_k})\right) = \sum_{l \in \Z^{\abs{u_k}}} \prod_{i \in u_k} \mathcal{N}(x_{i}+l_i \mid \mu_{i}, \Sigma_{i}) = \prod_{i \in u_k}\wN(x_{i}\mid \mu_{i}, \Sigma_{i}),
\end{equation}
where ($\wN$) $\mathcal{N}$ is the pdf of the univariate (wrapped) Gaussian density function with parameters $\mu_i$ and $\sigma^2=\Sigma_{i}.$ Since it is practically impossible to numerically compute the probability function of the wrapped Gaussian distribution, and due to the assumption on its covariance matrix, which is positively definite, it can been shown that 
\begin{equation*}
	\wN(x_{u_k}\mid \mu_{u_k}, \Sigma_{u_k}) \approx \wN^B(x_{u_k}\mid \mu_{u_k}, \Sigma_{u_k}),
\end{equation*}
where
\begin{equation*}
	\wN^B(x_{u_k}\mid \mu_{u_k}, \Sigma_{u_k}) \coloneqq \sum_{l \in ([-B, B]\cap \Z)^{\abs{u_k}}}\mathcal{N}(x_{u_k}+l \mid \mu_{u_k}, \Sigma_{u_k})
\end{equation*}
for a suitably chosen $B \in \N.$ For instance \cite{Jona_Lasinio_2012, MardiaJupp2000} has derived some values of $B$ depending on the standard deviation $\sigma$ for $T=2\pi, d=1,$  where $\wN^B(x_{u_k}\mid \mu_{u_k})$ approximates gut the ground truth density function $\wN(x_{u_k}\mid \mu_{u_k})$.  It has been showed by \cite{MardiaJupp2000} for
\begin{equation}
	B= \begin{cases}
		1, & \text{if } \sigma \geq 2\pi,\\
		0, &\text{ otherwise},
	\end{cases}
\end{equation}  
and by \cite{Jona_Lasinio_2012} for
 \begin{equation*}
B= \begin{cases}
 	1, & \text{if } \sigma<2\pi/3,\\
 	2, & \text{if }  2\pi/3 \leq \sigma < 4\pi/3,
 \end{cases}
 \end{equation*}
the approximation is very accurate. As the space dimension increases, then $B$ also increased.  
 Thus we will consider the truncated function $\wN^B(x_{u_k}\mid \mu_{u_k}, \Sigma_{u_k})$ instead of $\wN(x_{u_k}\mid \mu_{u_k}, \Sigma_{u_k})$ in the rest of the paper. 
 \begin{remark}
 The pdf of the wrapped Gaussian distribution from \eqref{eq:defwrappedpdffull} can be interpreted as the marginal density function with respect to $(X_u, L_u) \in (\T^{\abs{u_k}}, \Z^{\abs{u_k}})$ of the joint pdf
 \begin{equation*}
 	f(x,l\mid \mu, \Sigma)=\mathcal{N}(x+l\mid \mu,\Sigma),
 \end{equation*}
where $\mu$ and $\Sigma$ are the wrapped normal distribution parameters of $X,$ and the hidden variable $L_u$ denotes the number of winding, i.e $Y=X+L \sim \mathcal{N}(\cdot\mid \mu+L, \Sigma)$ 
 \end{remark}
 The von Mises distribution, which represents the restriction of the pdf of an isotropic normal distribution to the unit circle  has the pdf
\begin{equation*}
	p_M(x_{u_k} \mid \theta_{u_k} = (\mu_{u_k}, \kappa_{u_k})) = \prod_{i \in u_k}\frac{1}{I_0(\kappa_i)}\exp\left(\kappa_i\cos\left(2\pi(x_i-\mu_i)\right)\right),
\end{equation*}
where $\mu_{u_k} \in \T^{\abs{u_k}}$ represents mean and $\kappa_{u_k} \in \R^{\abs{u_k}}_+$ and $I_0$ is the modified Bessel function of the first kind of order $0$.

To ensure a good approximation accuracy by the parametric family of sparse mixture models of wrapped and von Mises distribution, we assume furthermore that the ground function $g$ is smooth enough and has a compact support, since  Gaussian Mixture Models (GMM) has proved to be good approximators for continuous density functions with compact support \cite{ATHANASSIA2010}.

Under the same assumption as above we will derive a form of the marginal density function of $f$, where $f$ is defined as \eqref{eq:sparse_density_function}. Indeed we will introduce later in section~\ref{sec:learning} an algorithm that iteratively approximate the marginal density function of $f$. 
\begin{definition}\label{def:marginal_density_function}
	Let $X \in \T^d$ be a continuous random variable with probability density function $f$. For every $u \subset [d],$ the marginal probability density function with respect to $X_u \coloneqq (X_j)_{j \in u}$ is defined as 
	\begin{equation*}
		f_{X_u}(x_u\mid \theta_u) = P_{u}f(x \mid \theta)= \int_{\T^{d - \abs{u}}} f(x_u, x_{u^c}\mid \theta)dx_{u^c},
	\end{equation*}
	where $u^c = [d]\setminus u$ and $P_u: \T^d \rightarrow \T^{\abs{v}}$ is the \textit{projection operator}.
\end{definition}
The linearity of the integral immediately implies that the marginal distribution of a mixture model is equal to the mixture of the marginal of each mixture component and thus the linearity of the projection operator. Since two different components may have the same marginal (i.e the same parameters), then they are put together by summing their mixing weights. This implies, that number of components of the marginal is smaller or equal to the number of components of the ground mixture model. Furthermore if the multivariate random variable $X$ is componentwise independent or (wrapped) Gaussian distributed with parameter $\theta$ then the marginal distribution with respect to the subset of random variable $X_u, u \subset [d]$ is of the same family as the ground distribution with parameters $\theta_u = (\theta_j)_{j \in u}.$ For the special case of wrapped Gaussian distribution with dependent random variables the assumption also holds. Before stating the theorem on the marginals of sparse mixture models of wrapped Gaussian or von Mises distributions, let us recall first the marginal and conditional distribution of a multivariate wrapped Gaussian distribution.

\begin{lemma}\label{lem:conditional_distr_marginal_dist}
	Let $u \subset [d], n \coloneqq \abs{u}$ and $n_c = d-n$. Let furthermore $X =\left(X_u, X_{u^c}\right) \in \left(\T^{n}, \T^{n_c}\right)$ be a multivariate continuous random variable of a wrapped Gaussian distribution, i.e
	\begin{equation*}
		X \sim \wN\left(\mu, \Sigma\right),
	\end{equation*} 
	with parameters
	\begin{equation*}
		\mu \coloneqq \begin{pmatrix}
			\mu_{u}\\
			\mu_{u^c}
		\end{pmatrix}, \Sigma \coloneqq \begin{pmatrix}
			\Sigma_{uu} & \Sigma_{uu^c} \\ 
			\Sigma_{u^cu} & \Sigma_{u^cu^c}
		\end{pmatrix},
	\end{equation*}
	such that $\mu_{u} \in \T^n, \mu_{u^c} \in \T^{n_c}$ are the mean  and $ \Sigma_{uu} \in  \R^{n\times n},\Sigma_{u^cu^c} \in  \R^{n_c\times n_c}, \Sigma_{uu^c} \in \R^{n\times n_c}, \Sigma_{u^cu} \in \R^{n_c\times n}$ are positive definite covariance matrices parameters. Then the marginal distribution of $X_u$ and $X_{u^c}$ are also a wrapped Gaussian distribution, such that
	\begin{equation*}
		X_u \sim \wN\left(\mu_u, \Sigma_{uu}\right), X_{u^c} \sim \wN\left(\mu_{u^c}, \Sigma_{u^cu^c}\right).
	\end{equation*}
	The conditional distribution of $X_{u}\mid X_{u^c}$ is given by
	\begin{equation}
		X_{u^c}\mid (X_u,L_u)=(x_u,l_u) \sim \wN\left(\bar{\mu}, \bar{\Sigma}\right),
	\end{equation}
	with
	\begin{align*}
		\bar{\mu} & = \mu_{u^c} + \Sigma_{u^cu}\left(\Sigma_{uu}\right)^{-1}\left(x_u+l_u-\mu_{u}\right),\\
		\bar{\Sigma} &= \Sigma_{u^cu^c} - \Sigma_{u^cu}\left(\Sigma_{uu}\right)^{-1}\Sigma_{uu^c},
	\end{align*}
where $L_u$ denotes the $\abs{u}$-dimensional winding number~\cite{Jona_Lasinio_2012}
\end{lemma}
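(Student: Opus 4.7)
The plan is to derive everything from the underlying unwrapped Gaussian representation. Recall the construction used throughout the paper: a wrapped Gaussian $X \in \T^d$ arises from a Gaussian $Y = X + L \sim \mathcal{N}(\mu, \Sigma)$ with hidden integer winding vector $L \in \Z^d$. As pointed out in the remark preceding the lemma, the joint density of $(X,L)$ on $\T^d \times \Z^d$ is $\mathcal{N}(x+l\mid \mu,\Sigma)$, and the marginal pdf of $X$ is $\wN(x\mid\mu,\Sigma)=\sum_{l\in\Z^d}\mathcal{N}(x+l\mid\mu,\Sigma)$.

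For the marginal statement, I would start from Definition~\ref{def:marginal_density_function} and write
$$f_{X_u}(x_u) \;=\; \int_{\T^{n_c}}\sum_{l\in\Z^d}\mathcal{N}\bigl((x_u,x_{u^c})+l\,\big|\,\mu,\Sigma\bigr)\dx x_{u^c}.$$
Since the integrand is nonnegative, Tonelli lets me swap the sum and integral. Splitting $l=(l_u,l_{u^c})$ and performing the change of variables $y_{u^c}=x_{u^c}+l_{u^c}$ inside each summand, the union of the translated cubes $\T^{n_c}+l_{u^c}$ over $l_{u^c}\in\Z^{n_c}$ tiles $\R^{n_c}$ up to a null set. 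Collapsing the $l_{u^c}$-sum into an integral over $\R^{n_c}$ and using the standard marginalization rule for multivariate Gaussians, the remaining integral equals $\mathcal{N}(x_u+l_u\mid \mu_u,\Sigma_{uu})$. Summing over $l_u\in\Z^n$ yields exactly $\wN(x_u\mid \mu_u,\Sigma_{uu})$, and symmetry gives the analogous statement for $X_{u^c}$.

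For the conditional statement, I would again lift to $Y$. Conditioning on the event $(X_u,L_u)=(x_u,l_u)$ pins down $Y_u=x_u+l_u$ (this is precisely why the lemma conditions on the pair, not on $X_u$ alone). The classical Gaussian conditional formula applied to $Y=(Y_u,Y_{u^c})\sim\mathcal{N}(\mu,\Sigma)$ then gives
$$Y_{u^c}\mid Y_u=x_u+l_u \;\sim\; \mathcal{N}\bigl(\mu_{u^c}+\Sigma_{u^cu}\Sigma_{uu}^{-1}(x_u+l_u-\mu_u),\;\Sigma_{u^cu^c}-\Sigma_{u^cu}\Sigma_{uu}^{-1}\Sigma_{uu^c}\bigr),$$
which are precisely $\bar\mu$ and $\bar\Sigma$. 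Wrapping the conditioned Gaussian $Y_{u^c}$ back to $\T^{n_c}$ via $X_{u^c}=Y_{u^c}-L_{u^c}$ produces a wrapped Gaussian with the same parameters, because wrapping and the parameter summation in the wrapped-Gaussian pdf commute.

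The main obstacle is the tiling argument in the marginal step: one must verify carefully that interchanging $\sum_{l_{u^c}}$ with $\int_{\T^{n_c}}$ followed by the translation $y_{u^c}=x_{u^c}+l_{u^c}$ indeed reconstructs the Lebesgue integral over $\R^{n_c}$ without double counting, so that the Gaussian marginalization identity can be applied. The conditional step is then almost immediate once one recognizes that the joint variable $(X_u,L_u)$ is the natural conditioning object; conditioning on $X_u$ alone would leave a countable mixture over $l_u$, not a single wrapped Gaussian.
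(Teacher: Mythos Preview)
Your proposal is correct and follows the same overall strategy as the paper: pass to the unwrapped Gaussian $Y=X+L$, use the tiling $\bigcup_{l_{u^c}}(\T^{n_c}+l_{u^c})=\R^{n_c}$ to collapse the $l_{u^c}$-sum into a Lebesgue integral over $\R^{n_c}$, and then apply the Gaussian marginal and conditional formulas. The paper differs only in the level at which those Gaussian facts are invoked: rather than citing them, it derives the factorization
\[
\mathcal{N}(x+l\mid\mu,\Sigma)=\mathcal{N}\bigl((x+l)_u\mid\mu_u,\Sigma_{uu}\bigr)\cdot\mathcal{N}\bigl((x+l)_{u^c}\mid\bar\mu,\bar\Sigma\bigr)
\]
by an explicit block-matrix computation of $\Sigma^{-1}$ and $\det\Sigma$ via Schur complements, and for the conditional part it applies Bayes' theorem directly to the joint $(X,L)$-density instead of arguing through $Y$ and then re-wrapping. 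Your presentation is more conceptual and shorter; the paper's hands-on computation has the minor advantage that the Schur-complement form of $\bar\Sigma$ emerges from the algebra without appeal to an external result.
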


\begin{theorem} \label{thm: marginal_mixture_model}
	Let $X \in \T^d$ be a continuous random variable of a sparse mixture model of wrapped Gaussian distribution, with density function $f: \T^d \rightarrow \R_+$. Let furthermore $X_u:=(X_j)_{j \in u}, u \subset [d]$ be an $\abs{u}$-dimensional random variable and $\xi_k \coloneqq u\cap u_k$. Then the marginal distribution with respect to $X_u$ is also a sparse mixture model with density function
	\begin{equation}
		f_{X_u} = \sum_{t=1}^{T_u} \alpha_t p(\cdot \mid \theta_{u_t}),
	\end{equation}
	where $T_u \leq K$ and $u_t$ is element of
	\begin{equation*}
		U_{X_u} = \set{v\cap u \mid v \in U }.
	\end{equation*} 
and $U$ is the collection of the indices of interacting variables of $f.$
	The mixing weights and density functions of the marginal distribution are respectively
	\begin{equation*}	
		\alpha_t = \sum_{k=1}^K \alpha_{k} \chi_{\set{\theta_{\xi_k}=\theta_{u_t}}},
		\quad p(\cdot \mid \theta_{u_t}) = \begin{cases}
			1, & \text{if } u_t = \emptyset, \\
			P_{\xi_k}p(\cdot \mid \theta_{u_k}), &  \text{ otherwise},
		\end{cases} 
	\end{equation*}
	such that $\sum_{t}^{T_u}\alpha_t = 1$ and for $u_t \neq \emptyset$
	\begin{equation*}
		\theta_{u_t} = \left(\mu_{u_t}, \Sigma_{u_tu_t}\right) = \left(\mu_{\xi_k}, \Sigma_{\xi_k\xi_k}\right).
	\end{equation*}
\end{theorem}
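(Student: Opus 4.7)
The plan is to reduce the statement to two ingredients that are already in hand: the linearity of the projection (marginalization) operator, and the single-component marginal formula for multivariate wrapped Gaussian distributions given by Lemma~\ref{lem:conditional_distr_marginal_dist}. Everything else is bookkeeping.

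First I would write the marginal explicitly by interchanging the integral with the finite sum defining $f$, so that
\begin{equation*}
f_{X_u}(x_u) = P_u\Bigl(\sum_{k=1}^K \alpha_{u_k}\, p(x_{u_k}\mid \theta_{u_k})\Bigr) = \sum_{k=1}^K \alpha_{u_k}\, P_u\bigl(p(\cdot \mid \theta_{u_k})\bigr)(x_u).
\end{equation*}
Then for each fixed $k$ I would note that $p(x_{u_k}\mid \theta_{u_k})$ depends only on the coordinates in $u_k$, so the integration against $\dx x_{u^c}$ factors: the coordinates in $u_k^c\cap u^c$ are integrated over $\T$ individually and yield $1$, while the coordinates in $u_k\cap u^c = u_k\setminus \xi_k$ are truly integrated out. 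Applying Lemma~\ref{lem:conditional_distr_marginal_dist} to the $u_k$-dimensional wrapped Gaussian $p(\cdot\mid \theta_{u_k})$ with the partition $(u_k\cap u, u_k\setminus u) = (\xi_k, u_k\setminus \xi_k)$ gives
\begin{equation*}
P_u\bigl(p(\cdot\mid \theta_{u_k})\bigr)(x_u) = p\bigl(x_{\xi_k}\mid \theta_{\xi_k}\bigr),\qquad \theta_{\xi_k} = (\mu_{\xi_k},\Sigma_{\xi_k\xi_k}),
\end{equation*}
with the convention that the right-hand side equals $1$ in the degenerate case $\xi_k=\emptyset$ (where the whole component is integrated out). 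The same argument works verbatim for the product form \eqref{eq:defwrappedpdfdiag} and, with essentially no change, for the von Mises factorized density, which is the only additional check needed if one wants to cover both parametric families in one stroke.

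Next I would collect equal summands. Enumerate the distinct values that appear among $\{\xi_k : k \in [K]\}$ together with their parameters, i.e.\ partition the index set $[K]$ according to the equivalence $k\sim k'$ iff $\xi_k=\xi_{k'}=:u_t$ and $\theta_{\xi_k}=\theta_{\xi_{k'}}=:\theta_{u_t}$; the quotient set has at most $K$ elements and its index set is by construction contained in $U_{X_u}=\{v\cap u: v\in U\}$. Summing the coefficients within each class yields precisely the claimed expression $\alpha_t = \sum_{k=1}^K \alpha_{u_k}\chi_{\{\theta_{\xi_k}=\theta_{u_t}\}}$ and the density $p(\cdot\mid \theta_{u_t}) = P_{\xi_k} p(\cdot\mid \theta_{u_k})$ for any representative $k$ of the class, hence the required form of $f_{X_u}$.

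Finally I would verify normalization: since $\sum_{k=1}^K \alpha_{u_k}=1$ and the equivalence classes partition $[K]$, $\sum_{t=1}^{T_u}\alpha_t = \sum_{k=1}^K \alpha_{u_k}=1$. The only subtle point, which I consider the main potential pitfall rather than a genuine obstacle, is carefully handling the class corresponding to $u_t=\emptyset$: the contributions from all components whose index set is disjoint from $u$ collapse to a single constant $1$ times the sum of their weights, and this constant must be absorbed consistently into the sparse-MM formalism introduced in \eqref{eq:sparse_density_function}. Once that convention is fixed, the identification of $f_{X_u}$ as a sparse mixture model over $U_{X_u}$ follows.
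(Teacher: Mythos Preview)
Your proposal is correct and follows essentially the same route as the paper: apply linearity of the projection operator to pass the marginal inside the finite sum, invoke Lemma~\ref{lem:conditional_distr_marginal_dist} componentwise to identify each $P_u p(\cdot\mid\theta_{u_k})$ as a wrapped Gaussian with parameter $(\mu_{\xi_k},\Sigma_{\xi_k\xi_k})$ (or the constant $1$ when $\xi_k=\emptyset$), and then merge components whose parameters coincide by summing their weights. Your write-up is in fact more explicit than the paper's about the factoring of the integral over $u_k^c\cap u^c$ versus $u_k\setminus\xi_k$, the equivalence-class bookkeeping, and the normalization $\sum_t\alpha_t=1$, but there is no substantive difference in strategy.
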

\begin{proof}
	By definition of the marginal density function and by the linearity of the integral, the marginal density of the mixture model holds
	\begin{align*}
		f_{X_u} &= \int_{\T^{d - \abs{u}}} f dx_{u^c}= \sum_{k=1}^{K} \alpha_k \int_{\T^{d - \abs{u}}} p(\cdot \mid \theta_{u_k})dx_{u^c},\\
		&= \sum_{k=1}^{K} \alpha_k  P_vp(\cdot \mid \theta_{u_k}) = \sum_{k=1}^{K} \alpha_k  P_{\xi_k}p(\cdot \mid \theta_{u_k}).
	\end{align*}
	The definition of the marginal density function implies that
	\begin{equation*}
		p(\cdot \mid \theta_{u_t}) = \begin{cases}
			1, & \text{if } \xi_k = \emptyset, \\
			P_{\xi_k}p(\cdot \mid \theta_{u_k}), &  \text{ otherwise}.
		\end{cases} 
	\end{equation*}
	The theorem on conditional wrapped Gaussian distribution, implies that for each $k\in [K]$ the marginal distribution of each mixture component $k$ is a wrapped Gaussian distribution with parameter
	\begin{equation*}
		\theta_{{u_k}\cap u} = \left(\mu_{\xi_k}, \Sigma_{\xi_k\xi_k}\right),
	\end{equation*}
	if $\xi_k \neq \emptyset.$
	Thus the marginal of the mixture model yields
	\begin{equation*}
		f_{X_u} = \sum_{k=1}^K \alpha_k p(\cdot \mid \theta_{\xi_k}),
	\end{equation*}
	where $p(\cdot \mid \theta_{\xi_k})$ denotes the probability function of the wrapped Gaussian with parameter $\theta_{\xi_k}.$
	If there exists $k\neq t, k,t\in [k],$ such that  $\theta_{\xi_k} =\theta_{u_t\cap u}$ then combine both components by summing up their weights and reduce the number of mixture component to one.
\end{proof}

Theorem~\ref{thm: marginal_mixture_model} shows, that the marginal of a sparse mixture model of a parametric family of wrapped Gaussian or von Mises distribution may contain the uniform distribution as mixing component.
\subsection{Determination of Active Variables}\label{sec:Active_variables}


Assuming that the above assumptions are fulfilled, we can considerably reduce the complexity of learning the parameters of the sparse mixture models by removing the independent uniform distributed random variables. Indeed a random variables $X=\left(X_{\mathcal{A}}, X_{\mathcal{A}^c}\right) \in \T^d$ such that $\mathcal{A}\subset [d],$  $X_{\mathcal{A}}$ and $X_{\mathcal{A}^c}$ are independent yields the Bayes theorem
	\begin{equation}
		h(x)=p_1(x_{\mathcal{A}}\mid \theta_1)p_2(x_{\mathcal{A}^c}\mid x_{\mathcal{A}}, \theta_2)
	\end{equation}
where $p_1$ denotes the marginal density function with respect to $X_{\mathcal{A}}$ and $p_2$ the conditional density function of $X_{\mathcal{A}}\mid X_{\mathcal{A}^c}.$ By assumption  $p_2(x_{\mathcal{A}^c}\mid x_{\mathcal{A}}, \theta_2) = p_2(x_{\mathcal{A}^c}\mid \theta_2)$ since both random variables are independent. If we further assume that $p_1$ is a sparse density function having the form \eqref{eq:sparse_density_function} and $p_2$ is the uniform density function then 
\begin{equation}
	h(x)=\sum_{k=1}^{K}\alpha_k p(x_{u_k}\mid \theta_k),
\end{equation}
since the multivariate uniform density function on $\T$ is equal to $1$ everywhere.
Therefore  we can introduce the notion of \textit{active} and \textit{inactive variables} for sparse density functions.
\begin{definition}
Let $X\in \T^d$ be a multivariate random variable with density function 
		\begin{equation}
		f(x)=\sum_{k=1}^{K}\alpha_k p(x_{u_k}\mid \theta_k).
	\end{equation}
a mixture model of wrapped or von Mises mixture model.  The set $\mathcal{A}$ of \textit{active variables} of $f$ by
\begin{equation}
	\mathcal{A}_f :=\set{i \in [d] \mid \exists u \in U: i \in u}.
\end{equation} 
and any random variable $X_i$ such that $i\in \mathcal{A}$ is called \textit{active}.   
\end{definition}
Thus an active random variable $X_i$ is either non uniformly distributed or dependant to some $X_j$ such that $j\in \mathcal{A}.$ Otherwise the random variable is called inactive.
Taking as example the density function of the sparse mixture model defined in \eqref{eq:sparse_density_function}, the active set of each mixture component is given by $A_f^{u_k}:=\set{i \in u_k}$, which yields
\begin{equation*}
	\mathcal{A}_f= \bigcup\limits_{k=1, \ldots, K} A_f^{u_k}.
\end{equation*}
Based on $\mathcal{X}$ we can determine iteratively $\mathcal{A}_f$ by checking which features variables are non uniform distributed with the help of Kolmogorov-Smirnov test or which depends to the non-uniform random variables. Since the independence is generally not trivial, we will only test the random variables by correlations. 
We can explicitly determine the active set of density function $f$ given a large enough number of weighted samples by Algorithm~\ref{alg:det_active_set}  
\begin{algorithm}[!ht]
	\caption{Determine the active set}\label{alg:det_active_set}
	\begin{algorithmic}[1]
		\State \textbf{Input}: $d$-dimensional weighted samples $\mathcal{X}=\set{\left(x^n, w_n\right)_{n\in [N]}}$ with probability density function $f$, $\varepsilon_{KS}, \varepsilon_{c} >0.$ 
		\State \textbf{Output}: the active set $\mathcal{A}_f.$
		\State Determine the index set $\mathcal{S}$ of uniform distributed variables by:
		\For{$i=1, \cdots, d$}
		\If{$D((x^n_i, w_n)_{n\in [N]}) \leq \varepsilon_{KS}$}
			\State Add $\mathcal{S} \leftarrow \set{i} $
		\EndIf
		\EndFor
		\State Set $\mathcal{A}_f \coloneqq [d]\setminus \mathcal{S}$
		\For {$i$ in $\mathcal{S}$}
		\State Compute correlation coefficient vector $C(X_i, X_{\mathcal{A}})$between $X_i$ and $X_{\mathcal{A}}$
		\If{$C(X_i,X_j)\geq \varepsilon_{c}$ for some $j \in \mathcal{A}_f$}
			\State Add $\mathcal{A}_f \leftarrow \set{i}$
		\EndIf
		\EndFor
	\end{algorithmic}
\end{algorithm}
To better understand the concept let us consider two density functions of mixture of wrapped Gaussian distribution which will study in detail along the paper.
	\begin{example}\label{ex:theor_ex_1}
	Consider a  $15$-dimensional density functions
	\begin{equation*}
		f^j(x) 
		\coloneqq
		\sum_{k=1}^2  \alpha_k p_{u_k} (x_{u_k}|\mu_k,\Sigma_k), 
		\quad
		p (x_{u_k}|\mu_k,\Sigma_k) \coloneqq \sum_{l\in\Z^{|u_k|}}\mathcal N(x_{u_k}+l|\mu_k,\Sigma_k),
	\end{equation*} 
	where $j=1,2.$
	The first function $f^1$ has the parameters
	\begin{align*}
		U&\coloneqq\set{u_1,u_2 }
		= \set{\{0,1,8\},\{0,14\}},\\ 
		\alpha
		&\coloneqq(0.7,0.3),\\
		\mu 
		&\coloneqq \frac12 \left( (1,1,1)^{\intercal},(.5,.5)^{\intercal} \right),\\
		\Sigma_k &\coloneqq \sigma^2 I_{\abs{u_k}}, \sigma^2 \coloneqq 0.001.
	\end{align*}
	The second function $f^2$ has the parameters 
	\begin{align*}
		U&\coloneqq\set{u_1,u_2, u_3 } = \set{\{0,1,8\},\{0,14\}, \{5\}},\\ 
		\alpha
		&\coloneqq(0.7,0.2, 0.1),\\
		\mu 
		&\coloneqq \frac12 \left( (1,1,1)^{\intercal},(0.5,0.5)^{\intercal}, 0.3 \right),\\
		\Sigma_k &\coloneqq \sigma^2 I_{\abs{u_k}}, \sigma^2 \coloneqq 0.001.
	\end{align*}
Following the definition of an active variable, we can directly read the active set from the function definition which are respectively
\begin{equation*}
	\mathcal{A}_{f^1} = \set{0,1,8,14}, \mathcal{A}_{f^2} = \set{0,1,5,8,14}.
\end{equation*}
Applying formally Algorithm~\ref{alg:det_active_set}, the plot of the Kolmogorov-Smirnov distance of the weighted samples along each dimension, shows that the variables, whose indices are elements of $\set{0,1,8,14}$ are non uniform distributed for the first density function $f^1$ and the variables with index in  $\set{0,1,8,5,14}$ are non uniformly distributed by the second function $f^2.$ Since all variables are uncorrelated as the correlation it shows.
	\begin{figure}[h]
		\centering
		\includegraphics[width=0.7\textwidth]{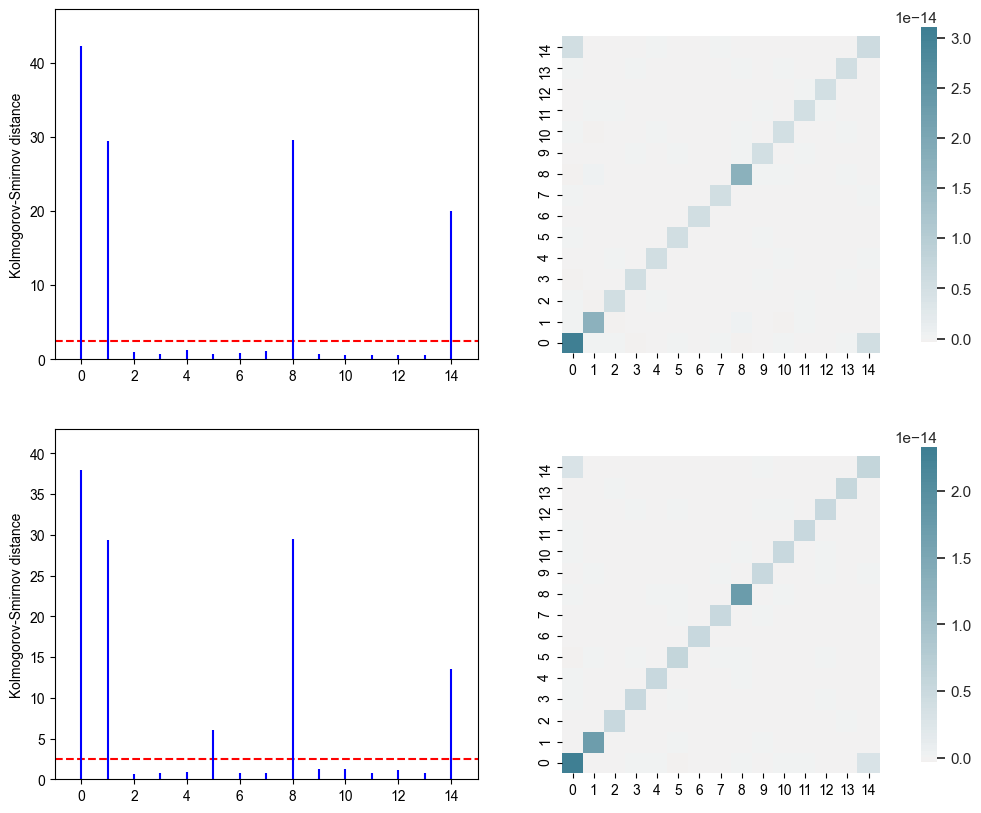}
	\caption{Kolmogorov-Smirnov distance(left) of each univariate samples $(x^n_i, w^n)_{n \in [N]}, N=10^4,  i=0, \ldots, 14.$  and correlation coefficient of the different variables $\text{Cor}(X_i, X_j), i,j=1, \ldots, 15$ (right). Top: Test function $f^1$ from example~\ref{ex:theor_ex_1}. Bottom: Test function $f^2.$ }
	\label{fig:active_set}
\end{figure}
\end{example}

Assuming that the active set $\mathcal{A}_f$  of the sparse mixture model is already known and increasingly ordered, we can iterate over the index $i \in \mathcal{A}_f$ of the active variables to determine the marginal distribution of the subset $X_{[i]}$ of the random variable $X$ with probability density function $f$. For shake of simplicity we will Denote by $P_{[i]}f$ the marginal probability density function with respect to $X_{[i]}$. It is equal to the marginal density function with respect to $X_{\bar{i}},$ where $\bar{i} \coloneqq [i] \cap \mathcal{A}_f.$ Let $r(i)= \abs{\bar{i}}$ represents the position of $i$ in $\mathcal{A}_f$ and by the same way the number of iterations. Theorem~\ref{thm: marginal_mixture_model} implies that for each $i \in \mathcal{A}_f$ the index set of all interacting variables of the marginal mixture model with respect to $X_{\bar{i}}$ is equal to 
\begin{equation}
	U^{r(i)} = \set{u^{r(i)}:=u\cap \bar{i}\mid u \in U},
\end{equation}  
with parameters set
\begin{equation*}
	\theta^{r(i)} = \set{\theta_{u_k\cap v} \coloneqq (\theta_j)_{j\in u_k\cap v}, k=1,\ldots,K \mid u_k \in U \text{ and } v=u^{r(i)}_k \in U^{r(i)}}.
\end{equation*}
Note that for $u^{r(i)}=\set{},$ the function $p(\cdot \mid \theta_{v}^{r(i)})$ is the probability density function of the uniform distribution, and for $r(i)=\abs{\mathcal{A}_f},$ the marginal density function $P_{[i]}f \coloneqq f_i$ is equal to $f$. By definition of the marginal mixture model, it follows that the active set $\mathcal{A}_{f_i} \subseteq \mathcal{A}_f$ and $\mathcal{A}_{f_i}^{u_k} \subseteq \mathcal{A}_f^{u_k}$ for all $k \in [K]$ and $i\in \mathcal{A}_f$. Thus we can further define for a fix $i \in \mathcal{A}_f$ the \textit{residual active set} of the ground function $f$ with respect to the marginal density function by
\begin{equation*}
	\overline{\mathcal{A}_f^{r(i)}} \coloneqq \mathcal{A}_f \setminus \mathcal{A}_{f_i},
\end{equation*}
and the residual active set of each mixture component density function $ p(\cdot \mid  \theta_{u_k}), u_k \in U, k=1, \ldots, K$ with respect to its marginal counterpart by
\begin{equation*}
	\overline{\mathcal{A}_f^{r(i),u_k}} \coloneqq \mathcal{A}_f^{u_k} \setminus \mathcal{A}_{f_i}^w,
\end{equation*}
where $\theta_w=\theta_{u_k}^{r(i)}, w \in U^{r(i)}$.
This notion of residual active set will be useful to considerably reduced the complexity of the algorithm presented in section~\ref{sec:learning}, which approximate iteratively the marginal density function of the sparse mixture model.
These new notions can be illustrated by two concrete examples. Indeed in the following we will consider two $15$-dimensional density function of sparse wrapped Gaussian mixture models. We will compute their marginal distributions with respect to the subset of random variables $X_u, \emptyset \neq u \subset [d].$  


In the following we will introduce two notions of \textit{effective dimension}, when dealing with very high dimensional sparse functions.
\begin{definition}\label{def:effective_dimension}
	Let $f: \T^d \rightarrow \R$ be a function and $\eta \in (0,1]$. The \textit{superposition dimension} $d_s>0$ at level $\eta$ is defined as
	\begin{equation}\label{eq:superpos_theshold}
		d_s := \argmin_{s \in [d-1] }\frac{\sum_{\substack{\emptyset \neq u \subseteq [d] \\ \abs{u} \leq s}}\sigma^2(f_u)}{\sigma^2(f)} \geq \eta,
	\end{equation}
	where the $\abs{u}$-dimensional functions  
	\begin{equation*}
		f_u = \sum_{v \subseteq u}(-1)^{\abs{u}-\abs{v}} P_vf
	\end{equation*}
	are the ANOVA-term of the function $f, P_v$ denotes the projection operator of definition~\ref{def:marginal_density_function} and $\sigma^2(\cdot)$ the variance of the corresponding functions. 
	The second notion of \textit{effective dimension} is the \textit{ truncation dimension }, which is defined as 
	\begin{equation*}
		d_t \coloneqq \argmin_{s \in [d-1] }\frac{\sum_{\emptyset \neq u \subseteq [s]} \sigma^2(f_u)}{\sigma^2(f)} \geq \eta.
	\end{equation*}
\end{definition}
We will combine later in section~\ref{sec:num_approx} these two notions of effective dimension to introduce an assumption of sparsity criterion for the density functions, we want to approximate. First, the function $f: \T^d \rightarrow \R$ from \eqref{ex:theor_ex_1} can be rewritten as
\begin{equation*}
	f=\sum_{w\in W} g_w,
\end{equation*}
where $W=U \subset \mathcal{P}([d]),$ and $g_w$ are linear combination of lower-dimensional functions depending only on variables with index set in $w$. For those class of density functions, it has been shown in \cite[proposition~2.1]{hertrich2021sparse}, that the ANOVA-decomposition of $f$ is equal to
\begin{equation*}
	f= \sum_{u \in \bar{W}} f_u,
\end{equation*} 
where $\bar{W}$ denotes the set of all $w \in W$ and all their subsets. Then the superposition dimension defined in \eqref{eq:superpos_theshold} is also 
\begin{equation*}
	d_s = \argmin_{s \in [d-1] } \frac{\sum_{\substack{\emptyset \neq u \in \bar{W}\\ \abs{u} \leq s}}\sigma^2(f_u)}{\sigma^2(f)},
\end{equation*}
and the truncation dimension
\begin{equation*}
	d_t = \argmin_{s \in [d-1]} \frac{\sum_{\emptyset \neq u \in \bar{W}_s}\sigma^2(f_u)}{\sigma^2(f)},
\end{equation*}
where 
\begin{equation*}
	\bar{W}_s \coloneqq \set{u \subseteq [d] \mid u \in \bar{W} \text{ and } u\subseteq [s] }.
\end{equation*}
Considering the ANOVA decomposition of the marginal density function $P_{\xi}f$ of $f$ associated to an arbitrary but fixed $i \in [d-1], \xi \coloneqq [i],$  it follows that 
\begin{equation*}
	P_{\xi}f = \sum_{u \subseteq [d]} \left(P_{\xi}f\right)_u, 
\end{equation*}
where 
\begin{equation}\label{eq:anova_term}
	\left(P_{\xi}f\right)_u = \sum_{v \subseteq u} (-1)^{\abs{u}-\abs{v}} P_vP_{\xi}f.
\end{equation}
We know by definition, that $P_{\xi}f$ depends only on the variables $x_{\xi}.$ Lemma~\cite[proposition~2.1]{hertrich2021sparse} implies that $(P_{\xi}f)_u =0$ for all $u \subseteq [d],$ such that $u$ is not included in $\xi.$ Thus
\begin{equation*}
	P_{\xi}f = \sum_{u \subseteq \xi} \left(P_{\xi}f\right)_u. 
\end{equation*}
For $v \subseteq u \subseteq \xi$ it holds that $P_vP_{\xi}f =P_vf$. Hence equation~\eqref{eq:anova_term} implies that $\left(P_{\xi}f\right)_u = f_u$ for all $u \subseteq \xi$. Thus with \cite[Lemma~2.9]{Potts2019} the truncation dimension defined in \ref{def:effective_dimension} yields
\begin{equation*}
	d_t = \argmin_{s \in [d-1]}\frac{\sigma^2(P_{[s]}f)}{\sigma^2(f)}.
\end{equation*}
Using this, we can introduce an iterative algorithm, which can approximate the marginal density function $P_{[i]}f$ for any $i \in \mathcal{A}_f$ of the form \eqref{eq:sparse_density_function}, under the assumption that a large enough number of samples are provided. The function $f$ is an accurate approximation to the ground function $g$. If $f$ is sparse in sense of equation~\ref{eq:sparse_density_function}, then there exists an element $v \subset [d], \abs{v}\ll d$ such that
\begin{equation*}
	\norm{g- P_{v}f} \leq \norm{g-f} + \norm{f_{X_{v^c}}-\mathbf{I}}\norm{P_{v}f}  \rightarrow 0,
\end{equation*}
and the maximal number of interacting variables are very small with respect to the space dimension.

\section{Learning Sparse Mixture Models} \label{sec:learning}

In the rest of this paper, we assume that all variables $x_1,\ldots,x_d$ in \eqref{eq:sparse_density_function}
are active.
For learning the sparse MM, we propose an algorithm which iteratively approximates the marginals
$\int_{\mathbb T^{d-r}} f \dx x_{r+1} \ldots \dx x_d$ for $r=1,\ldots,d$.
In the following, we give an idea of the algorithm  by describing its first two steps.
Let the samples $\{x^n = (x_k^n)_{k=1}^d: n =1,\ldots,N\}$ be given.
\\[2ex]
\textbf{Step 1:} Find an approximation of the first marginal by
\begin{equation} \label{m1}
	f^1(x_1) = \alpha_0^1 + \sum_{k=1}^{K_1} \alpha_k^1 p(x_1|\mu_k^1,\sigma_k^1) 
\end{equation}
from the samples $\{x_1^n:n =1,\ldots,N\}$ as follows:
\begin{itemize}
	\item[1.1] Determine $K_1$ by the BIC method described in Appendix \ref{sec:BIC}.
	\item[1.2] Apply a univariate EM algorithm 
	to compute $(\alpha_k^1,\mu_k^1,\sigma_k^1)$, $k=1,\ldots,K_1$ and $\alpha_0^1$ and to
	determine the probability $\beta^1_{k,n}$, $n  =1,\ldots,N$, $k=0,\ldots,K_1$  
	that $x_1^n$ belongs to the $k$-th mixture component.
\end{itemize}

\textbf{Step 2:} Find an approximation of the first two marginals by the following steps:
\begin{itemize}
	\item[2.1] For each $k=0,\ldots,K_1$ determine if the weighted samples
	$\{ \beta^1_{k,n} x_2^n: n =1,\ldots,N\}$ are uniformly distributed and uncorrelated
	by the Kolmogorov-Smirnov test in Appendix \ref{sec:ks} and correlation estimate in Appendix \ref{sec:ct}.
	Then we get
	$$
	\{0,\ldots,K_1\} = K_{nu} \cup K_{u},
	$$
	where $K_{nu}$ denote the indices of those mixture summands in \eqref{m1}, 
	where the samples are not uniformly distributed
	and $K_{u}$ the other ones.
	
	\item[2.2] For each $k \in K_{nu}$ and samples $\{ \beta_{n,k}^1 x_2: n  =1,\ldots,N \}$
	determine
	\begin{equation} \label{m2}
		f_k^2(x_2) =  \alpha_{k,0}^2 + \sum_{l=1}^{L_k} \alpha_{k,l}^2 p(x_2|\mu_{k,l}^2,\sigma_{k,l}^2) 
	\end{equation}
	by computing
	\begin{itemize}
		\item $L_k$ by the BIC method described in Appendix \ref{sec:BIC}.
		\item $(\alpha_{k,l}^2,\mu_{k,l}^2,\sigma_{k,l}^2)$, $l=1,\ldots,L_k$ and $\alpha_{k,0}^2$
		by a univariate EM algorithm. These parameters will be used as initial ones in the next EM step.
	\end{itemize}
	\item[2.3] Case 1: If $0 \in K_u$, set $p(x_1|\mu_{0}^{1},\sigma_{0}^{1}) := 1$ and 
	compute the parameters $(\alpha_{k,l}^{1,2}, \mu_{k,l}^{1,2}, \Sigma_{k,l}^{1,2})$ 
	and 
	determine the probability $\beta^{1,2}_{kl,n}$, $n  =1,\ldots,N$, $l=1,\ldots,L_k$, $k\in K_{nu}$,
	in the MM
	\begin{align} \label{m12_1}
		f_k^{1,2}(x_1,x_2) &=   \alpha_0^1 +
		\sum_{k \in K_{u}} \alpha_{k}^{1} p(x_1|\mu_{k}^{1},\sigma_{k}^{1}) \\
		&+ \sum_{k \in K_{nu}} 
		\Big( \sum_{l=1}^{L_k} \alpha_{k,l}^{1,2} p(x_1,x_2|\mu_{k,l}^{1,2},\Sigma_{k,l}^{1,2}) 
		+ \alpha_{k,0}^{1,2} p(x_1|\mu_{k}^{1},\sigma_{k}^{1}) \Big) \nonumber
	\end{align}
	with $\alpha_{k,0} \coloneqq \alpha_k^1 \alpha_{k,0}^2$ and initialization for $l=1,\ldots,L_k$ as
	\begin{equation} \label{init}
		\alpha _{k,l}^{1,2} \coloneqq \alpha_k^1 \alpha_{k,l}^2, \quad
		\mu_{k,l}^{1,2} \coloneqq \begin{pmatrix} \mu_k^1\\ \mu_{k,l}^2 \end{pmatrix}, \quad
		\Sigma_{k,l}^{1,2} \coloneqq 
		\begin{pmatrix} \sigma_k^1& 0\\ 0& \sigma_{k,l}^2 
		\end{pmatrix}.
	\end{equation}
	Case 2: If $0 \in K_{nu}$, 
	compute the parameters $(\alpha_{k,l}^{1,2},\mu_{k,l}^{1,2}, \Sigma_{k,l}^{1,2})$
	and determine the probability $\beta^{1,2}_{kl,n}$, $n  =1,\ldots,N$, $l=1,\ldots,L_k$, $k\in K_{nu}$
	in the MM
	\begin{align} \label{m12_2}
		f_k^{1,2}(x_1,x_2) 
		&=  \alpha_{0,0}^{1,2} + \sum_{l=1}^{L_0} \alpha_{0,l}^{1,2} p(x_2|\mu_{0,l}^{1,2},\sigma_{0,l}^{1,2}) +
		+ \sum_{k \in K_{u}} \alpha_{k}^{1} p(x_1|\mu_{k}^{1},\sigma_{k}^{1}) \\
		&+ \sum_{k \in K_{nu}\setminus \{0\}} \Big(
		\sum_{l=1}^{L_k} \alpha_{k,l}^{1,2} p(x_1,x_2|\mu_{k,l}^{1,2},\Sigma_{k,l}^{1,2}) 
		+ \alpha_{k,0}^{1,2} p(x_1|\mu_{k}^{1},\sigma_{k}^{1}) 
		\Big)
		\nonumber
	\end{align}
	We use the same initialization \eqref{init} for $k \in K_{nu} \setminus \{0\}$ and
	\begin{equation} \label{init_0}
		\alpha _{0,l}^{1,2} \coloneqq \alpha_0^1 \alpha_{0,l}^2, \quad
		\mu_{0,l}^{1,2} \coloneqq  \mu_{0,l}^2, \quad
		\sigma_{0,l}^{1,2} \coloneqq 
		\sigma_{0,l}^2 .
	\end{equation}
\end{itemize}
If we use a MM with wrapped Gaussians with just diagonal covariance matrices, Step 2.3 is superfluous
and the new parameters are those from the initialization.
\begin{remark}
	If we consider the sparse mixture model of diagonal wrapped Gaussian or von Mises distribution then the estimation step $12$ and $18$ in algorithm~\ref{alg:density_func_approx} will be resumed to fitting univariate marginal distribution. This will considerably increase the computation (time and storage) complexity.
\end{remark}
\begin{algorithm}[!ht]
	\caption{Active set detection and parameters estimate}\label{alg:density_func_approx}
	\begin{algorithmic}[1]
		\State \textbf{Input}: $(x^1,...,x^N)\in\T^{d, N}$, $(w_1,\ldots,w_N) \in \mathbb R^N$, $K_{max}\in \mathbb{N}, \varepsilon_{KS} ,\varepsilon_{c}>0$
		\State \textbf{Output}: $U, \hat{\theta}$
		\State \textbf{Set} $U=\set{u=\set{}}, \alpha=(1),\theta=\set{()}, r(-1)=0, n_u=1$
		\For {$i=1, \ldots d$}
		\State \textbf{Set} $K = \sum_{u \in U^{(i-1)}}n_u, U^{(i)}_{new}=\set{}, U^{(i)}_{fix}=\set{}, \theta^{(i)}_{new}=\set{}, \theta^{(i)}_{fix}=\set{}$
		\For{$k=0,\ldots, K$}
		\State Update the samples weights w.r.t the $k$-component of $P_{[i-1]}f$
		\begin{equation*}
			\tilde{w}_{n,k}^i = w_n\beta_{n,k}^{i}, \text{ where }	\beta_{n,k}^{i} = \frac{\alpha_k p(x^n_{u_k}\mid \theta_k)}{\sum_{t=1}^K \alpha_t p(x^n_{u_t}\mid \theta_t)},
		\end{equation*}
		\State Determine the corresponding residual active set
		\begin{equation}
			\overline{\mathcal A_{f_i}^{u_k}} = \set{ i \in \overline{\mathcal A_{f_{s(i)}^{u_k}}} \mid  D_N\left((x^n_i, \tilde{w}_{n,k}^i)_{n\in [N]}\right)\geq \varepsilon_{KS} }
		\end{equation}
		\If{ $i \in \overline{\mathcal A_{f_i}^{u_k}}$}
		\State Add to $U^{(i/2)}_{new}\leftarrow u^{(i/2)} = u_k \cup \set{i}$
		\State Determine the number of the mixture components with the index set $u_k^{(i)}$
		\begin{align*}
			n_k &= \argmin_{k=1,\cdots, K_{max}} \text{BIC}\left((x^n_i, \tilde{w}_{n,k}^i), k\right)\\
			(\alpha_j,\theta_j^{(i)})_{j \in [n_k]}&=\text{Prox-EM}\left((x_i^n, \tilde{w}^n)_{n\in [N]},(\theta^{0}_t)_{t \in [n_k]}\cup \theta_{\emptyset},\gamma_2 \right)
		\end{align*}
		\State Set $\theta = \set{(\theta_{u_k}^{(i-1)}, \theta_j^{(i)}), \alpha_k\cdot \alpha_j}_{j \in [n_k]}$ and add $\theta^{(i/2)}_{new}\leftarrow \theta$
		\Else 
		\State Add $U^{(i/2)}_{fix}\leftarrow u_k$ and $\theta^{(i)}_{fix} \leftarrow \theta_{u_k}^{(i-1)}$
		\EndIf
		\EndFor
		\State Compute $\bar{w}_n = w_n \cdot \beta_{n,G_2}$ the posterior probabilities $\beta_{n,G_2}$ of the samples according to section~\ref{sec:em}, where $G_2 = \left(U^{(i/2)}_{new},\theta^{(i/2)}_{new}\right)$ and set $\bar{w}_n = w_n \cdot \beta_{n,G_2},$ for all $n\in [N].$ 
		\State Update the parameters with the EM-Algorithm~\ref{alg:alg_em_mm_orig}
		\begin{equation*}
			U^{(i)}_{new},\alpha^{(i)}_{new}, \theta^{(i)}_{new} = \text{Prox-EM}\left((x^n, \bar{w}_n)_{n\in [N]}, \theta^{(i/2)}_{new}, \gamma_1\right)
		\end{equation*}
	\State Set $U^{r(i)}=U^{(i)}_{new}\cup U^{(i)}_{fix}$ and $\theta^{(i)}=\theta^{(i)}_{new}\cup \theta^{(i)}_{fix}$
		\EndFor
	\end{algorithmic}
\end{algorithm}
\section{Experimental Results} \label{sec:numerics}
In the following we will apply  Algorithm~\ref{alg:density_func_approx} to determine the collection of variable interactions and the associated mixture components parameters for the test functions defined in \ref{sec:}, the product of B-splines function and the California Housing data. To evaluate the model the log-likelihood the training data and the test will be compared with each other. Furthermore the relative $L^p, p=1,2$ errors between the ground truth function $f$ and the approximated model  $\hat{p}$ on unknown test data will be computed. Recall that the relative $L^p$-error is defined as
\begin{equation*}
	e_{L^p}(\hat{p}, f)=\frac{\norm{\hat{p}-f}_{L^p}}{\norm{f}_{L^p}},
\end{equation*}
  where  $\norm{\cdot}_{L^p}$ will be determined via the Monte-Carlo integration, i.e
  \begin{align}
  	\norm{f}_{MC}=\frac{1}{N_{MC}}\sum_{n=1}^{N_{mc}}\abs{f(x^n)}^p  \text{ and }  \norm{\hat{p}-f}_{MC}=\frac{1}{N_{MC}}\sum_{n=1}^{N_{mc}}\abs{\hat{p}(x^n)-f(x^n)}^p
  \end{align}
where $\left(x^n\right)_{n\in N_{MC}}$ are uniformly distributed samples on $\T^d$. For large value of $N_{MC}$ the Monte-Carlo norm $\norm{\cdot}_{MC}$ yields an accurate approximation of $\norm{\cdot}_{L^p}.$ Therefore $N_{MC}=10^5$ will be used to compute the relative errors. 
  
  To train the function from example\ref{ex:theor_ex_1} $N_{tr}=75\cdot 10^2,10^4$ training samples has been drawn by the rejection sampling method\cite{hertrich2021sparse, WANG2000231}. The parameters used by Algorithm~\ref{alg:density_func_approx} to learn the model are in Table~\ref{tab:Hyperparameter_approx_test_functions}. The approximated model from Algorithm~\ref{alg:density_func_approx} given in Table~\ref{tab:param_estimate_example_wrapped_normal_test_function} are obviously  an accurate approximation of the ground truth mixture models. The negative log-likelihood and the relative error in Table~\ref{tab:log_likeli_rela_parame} prove this. 
  

\begin{table}
	\begin{center}
		\resizebox*{!}{4.7cm}{
			\begin{tabular}{|c|c|c|c|c|c|c|}
				$N$ samples $\cdot 10^2$&Method&Truth&$\varepsilon_{KS}$&$\varepsilon_{c}$&$\gamma_1 \cdot 10^{-3}$&$\gamma_2 \cdot 10^{-3}$\\
				\hline
				\multirow{4}{*}{$75$}&\multirow{2}{*}{wrapped Gaussian}&a)&\multirow{4}{*}{$4.7$}&\multirow{8}{*}{$0.1$}&\multirow{2}{*}{$3.0$}&\multirow{8}{*}{$1.0$}\\
				&&b)&&&&\\ \cline{2-3}
				&\multirow{2}{*}{von Mises}&a)&&&\multirow{2}{*}{$5.0$}&\\
				&&b)&&&&\\ \cline{1-4}
				\multirow{4}{*}{$100$}&\multirow{2}{*}{wrapped Gaussian}&a)&\multirow{2}{*}{$5.0$}&&\multirow{2}{*}{$3.0$}&\\ 
				&&b)&&&&\\ \cline{2-3}
				&\multirow{2}{*}{von Mises}&a)&&&\multirow{2}{*}{$5.0$}&\\
				&&b)&&&&\\
				\hline
			\end{tabular}
		}\\
	\end{center}
	\caption{Hyper-parameters for the first two test functions}
	\label{tab:Hyperparameter_approx_test_functions}
\end{table}
\begin{table}
	\begin{center}
		\resizebox*{!}{2.7cm}{
			\begin{tabular}{c|c|cccc}
				Truth&Method&$\mathcal L_f(x^1,...,x^N)$&$\mathcal L_{\hat p}(x^1,...,x^N)$&$e_{L^1}(\hat p,f)$&$e_{L^2}(\hat p,f)$\\\hline
				a)&wrapped&$36546.7\pm 190.5$&$36553.7\pm 189.3$&$0.0343\pm 0.0069 $&$0.0312\pm 0.0059$\\
				a)&comp.\ wrapped&$36546.7\pm 190.5$&$36551.9\pm 189.5$&$0.0329\pm 0.0076$&$0.0311\pm 0.0069$\\
				a)&von Mises&$36588.8\pm 201.4$&$36592.7 \pm 199.8$&$0.0325\pm0.0072 $&$0.0312\pm 0.0123$\\\hline
				b)&wrapped&$33591.0\pm 261.2$&$33600.9\pm 258.2$&$0.0371\pm 0.0061$&$0.0302\pm 0.0073$\\
				b)&comp.\ wrapped&$33591.0\pm 261.2$&$33598.7\pm 258.4$&$0.0356\pm 0.0079$&$0.0304\pm 0.0097$\\
				b)&von Mises&$33647.5\pm 274.5$&$33514.7\pm 498.4$&$0.0505\pm 0.0501$&$0.0285\pm 0.0094$
			\end{tabular}
		}\\
		\vspace{0.3cm}
		\resizebox*{!}{2.7cm}{
			\begin{tabular}{c|c|cccc}
				Truth&Method&$\mathcal L_f(x^1,...,x^N)$&$\mathcal L_{\hat p}(x^1,...,x^N)$&$e_{L^1}(\hat p,f)$&$e_{L^2}(\hat p,f)$\\\hline
				a)&wrapped&$48660.7\pm 176.3 $&$48669.5\pm 175.1$&$0.0297\pm 0.0080 $&$0.0239\pm 0.0077$\\
				a)&comp.\ wrapped&$48660.7\pm 176.3$&$48667.6\pm 175.0$&$0.0299\pm 0.0073$&$0.0275\pm 0.0074$\\
				a)&von Mises&$48763.3\pm 189.8$&$48768.0\pm 188.0 $&$0.0287\pm 0.0081$&$0.0231\pm 0.0083$\\\hline
				b)&wrapped&$44967.1\pm253.7$&$44977.8\pm252.5$&$0.0403\pm0.0117$&$0.0344\pm0.0179$\\
				b)&comp.\ wrapped&$44967.1\pm 253.7$&$44975.4\pm252.5$&$0.0385\pm0.0120$&$ 0.0312\pm0.0152$\\
				b)&von Mises&$44881.0\pm308.1$&$44512.3\pm921.8$&$ 0.0720\pm0.0981$&$0.0373\pm 0.0247$
			\end{tabular}
		}
	\end{center}
	\caption{Approximation of $f^i, i=1,2$ in example~\ref{ex:theor_ex_1} by  sparse mixture models of wrapped Gaussian and von Mises distribution from section~\ref{sec:SPAMM}. 
		Top: $N=7500$, Bottom: $N=10000$. 
		Average value of the log likelihood function and relative $L_q$ errors, $q=1,2$ for $N_{te}=10^5$ training sets
		top: $N_{tr}=7500,$  bottom: $N_{tr}=10000$.}
	\label{tab:log_likeli_rela_parame}
\end{table}

\begin{table}
	\begin{center}
		\resizebox*{!}{8.7cm}{
			\begin{tabular}{c|c|c|cccl}
				Truth&Method&Time(in $s$)&$\hat{u}$&$\hat{\alpha}$&$\hat{\mu}$&$\hat{\Sigma}/\hat{\kappa}$\\\hline
				\multirow{2}{*}{a)}&\multirow{2}{*}{wrapped}&\multirow{2}{*}{$18.2\pm 2.41$}&$\set{0,1,8} $&$0.7018$&$\begin{pmatrix}
				0.50045889\\ 0.49997703\\ 0.50017948
				\end{pmatrix}$&$\begin{pmatrix}
					9.8064\cdot10^{-4} &2.9304\cdot10^{-5} & 1.3477\cdot10^{-5} \\
					2.9304\cdot10^{-5} & 9.8954\cdot10^{-4} &-2.0069\cdot10^{-5} \\
					1.3477\cdot10^{-5} & -2.0069\cdot10^{-5} & 1.0160\cdot10^{-3}
				\end{pmatrix}$\\\cline{4-7}
				&&& $\set{0,14}$&$0.2982$&$\begin{pmatrix}
				0.24928052 \\0.24895061
				\end{pmatrix}$&$\begin{pmatrix}
					1.0302\cdot10^{-3}& 2.3457\cdot10^{-6} \\
					2.3457\cdot10^{-6} & 9.7430\cdot10^{-4}
				\end{pmatrix}$\\\hline
				\multirow{2}{*}{a)}&\multirow{2}{*}{comp.\ wrapped}&\multirow{2}{*}{$17.3\pm 2.6$}&$\set{0,1,8} $&$0.7019$&$\begin{pmatrix}
					0.50046016\\ 0.49997673\\ 0.50017988
				\end{pmatrix}$&$\begin{pmatrix}
					9.9049\cdot10^{-4} &0 &0\\
					0&9.9954\cdot10^{-4} &0\\
					0&0&1.0259\cdot10^{-3}
				\end{pmatrix}$\\\cline{4-7}
				&&& $\set{0,14}$&$0.2981$&$\begin{pmatrix}
				0.24928027 \\0.24895058
				\end{pmatrix}$&$\begin{pmatrix}
					1.0401\cdot10^{-3} & 0\\
					0&9.8431\cdot10^{-4}
				\end{pmatrix}$\\\hline
				\multirow{2}{*}{a)}&\multirow{2}{*}{von Mises}&\multirow{2}{*}{$56.7\pm 5.9$}&$\set{0,1,8}$&$0.6977$&$\begin{pmatrix}
				0.5003044\\  0.50002035\\ 0.49994604
				\end{pmatrix}$&$\begin{pmatrix}
					25.1364 \\ 25.4775 \\ 26.1858
				\end{pmatrix}$\\\cline{4-7}
				&&& $\set{0,14}$&$0.3023$&$\begin{pmatrix}
					0.25072411\\ 0.2506996
				\end{pmatrix}$&$\begin{pmatrix}
				25.4839 \\  26.8531
				\end{pmatrix}$\\\hline
				\multirow{3}{*}{b)}&\multirow{2}{*}{wrapped}&\multirow{2}{*}{$26.0\pm 2.8$}&$\set{0,1,8} $&$0.7067$&$\begin{pmatrix}
				0.49912957\\ 0.49998626\\ 0.50040215
				\end{pmatrix}$&$\begin{pmatrix}
					9.9217\cdot10^{-4} &-1.2127\cdot10^{-5} & -2.4808\cdot10^{-6}\\
					-1.2127\cdot10^{-5} & 1.0013\cdot10^{-3} &-1.7085\cdot10^{-5}\\
					-2.4808\cdot10^{-6} &-1.7085\cdot10^{-5}  &9.8552\cdot10^{-4}\\
				\end{pmatrix}$\\\cline{4-7}
				&&& $\set{0,14}$&$0.1955$&$\begin{pmatrix}
				0.24912379\\ 0.25057763
				\end{pmatrix}$&$\begin{pmatrix}
				9.8932\cdot10^{-4} & -7.8356\cdot10^{-6}\\
				-7.8356\cdot10^{-6} & 1.0164\cdot10^{-3}
				\end{pmatrix}$\\\cline{4-7}
				&&& $\set{5}$&$0.0978$&$0.29812577$&$1.0603\cdot10^{-3}$\\\hline
				\multirow{3}{*}{b)}&\multirow{3}{*}{comp.\ wrapped}&\multirow{3}{*}{$ 24.2\pm 2.3$}&$\set{0,1,8} $&$0.7014$&$\begin{pmatrix}
				0.50035103\\ 0.49968499 \\0.49978892
				\end{pmatrix}$&$\begin{pmatrix}
					1.0204\cdot 10^{-3} &0 & 0\\
					0 &1.0274\cdot 10^{-3} & 0\\
					0 & 0& 1.0058\cdot 10^{-3}
				\end{pmatrix}$\\\cline{4-7}
				&&& $\set{0,14}$&$0.2009$&$\begin{pmatrix}
				0.24977141\\ 0.24948918
				\end{pmatrix}$&$\begin{pmatrix}
					1.0074\cdot 10^{-3} & 0\\
					0& 9.7733\cdot 10^{-4}
				\end{pmatrix}$\\\cline{4-7}
				&&& $\set{5}$&$0.0976$&$0.30006964$&$1.0835\cdot 10^{-3}$\\\hline
				\multirow{3}{*}{b)}&\multirow{3}{*}{von Mises}&\multirow{3}{*}{$74.8\pm 18.9$}&$\set{0,1,8} $&$0.7030$&$\begin{pmatrix}
					0.49973165 \\0.50038246 \\0.4999259
				\end{pmatrix}$&$\begin{pmatrix}
					25.9242\\ 25.6306\\ 25.6864
				\end{pmatrix}$\\\cline{4-7}
				&&& $\set{0,14}$&$0.1968$&$\begin{pmatrix}
					0.25210973\\ 0.24947434
				\end{pmatrix}$&$\begin{pmatrix}
					25.5994\\ 26.2484
				\end{pmatrix}$\\\cline{4-7}
				&&& $\set{5}$&$0.1001$&$\begin{pmatrix}
				0.2996465
				\end{pmatrix}$&$25.9420$\\\hline
			\end{tabular}
		}\\
		\vspace{0.3cm}
		\resizebox*{!}{8.7cm}{
			\begin{tabular}{c|c|c|cccl}
				Truth&Method&Time(in $s$)&$\hat{u}$&$\hat{\alpha}$&$\hat{\mu}$&$\hat{\Sigma}/\hat{\kappa}$\\\hline
				\multirow{2}{*}{a)}&\multirow{2}{*}{wrapped}&\multirow{2}{*}{$16.2\pm 2.6 $}&$\set{0,1,8} $&$0.7016$&$\begin{pmatrix}
				0.49992579\\ 0.50091679 \\0.50041143
				\end{pmatrix}$&$\begin{pmatrix}
					9.9620\cdot 10^{-4} &3.2355\cdot10^{-6}&9.0119\cdot10^{-6} \\
					3.2355\cdot10^{-5}&1.0112\cdot 10^{-3}&2.6781\cdot 10^{-6}\n \\
				9.0119\cdot10^{-6}&2.6781\cdot 10^{-6}&1.0090\cdot 10^{-3}
				\end{pmatrix}$\\\cline{4-7}
				&&& $\set{0,14}$&$0.2984$&$\begin{pmatrix}
				0.24946869 \\0.24983233
				\end{pmatrix}$&$\begin{pmatrix}
					 1.0076\cdot 10^{-3}&-1.3917\cdot 10^{-5} \\
					-1.3917\cdot 10^{-5}& 9.5561\cdot 10^{-4}
				\end{pmatrix}$\\\hline
				\multirow{2}{*}{a)}&\multirow{2}{*}{comp.\ wrapped}&\multirow{2}{*}{$15.3\pm 1.2$}&$\set{0,1,8} $&$0.7014$&$\begin{pmatrix}
					0.5000503 \\ 0.49934066 \\ 0.50020789
				\end{pmatrix}$&$\begin{pmatrix}
					99311\cdot10^{-4} &0&0 \\
					0&1.0204\cdot10^{-3}&0\n \\
					0&0&1.00854\cdot10^{-3}
				\end{pmatrix}$\\\cline{4-7}
				&&& $\set{0,14}$&$0.2986$&$\begin{pmatrix}
				0.24970119 \\0.2505222\n 
				\end{pmatrix}$&$\begin{pmatrix}
					1.0284\cdot 10^{-3}&0 \\
					0&1.0387\cdot 10^{-4}
				\end{pmatrix}$\\\hline
				\multirow{2}{*}{a)}&\multirow{2}{*}{von Mises}&\multirow{2}{*}{$15.9\pm 3.6$}&$\set{0,1,8}$&$0.7080$&$\begin{pmatrix}
				0.49985056\\ 0.50007666\\ 0.50037997
				\end{pmatrix}$&$\begin{pmatrix}
				25.6780\\ 26.1047\\  25.9094
				\end{pmatrix}$\\\cline{4-7}
				&&& $\set{0,14}$&$0.2920$&$\begin{pmatrix}
				0.24995573\\ 0.25006754
				\end{pmatrix}$&$\begin{pmatrix}
				25.4725\\ 25.0976
				\end{pmatrix}$\\\hline
				\multirow{2}{*}{b)}&\multirow{2}{*}{wrapped}&\multirow{2}{*}{$27.7\pm 6.7$}&$\set{0,1,8} $&$0.6992$&$\begin{pmatrix}
					0.49996806\\ 0.50002489 \\0.50002011
				\end{pmatrix}$&$\begin{pmatrix}
					1.0072\cdot 10^{-3} & 1.1067\cdot 10^{-5} & 1.1380\cdot10^{-6} \\
				1.1067\cdot 10^{-5} & 1.0151\cdot 10^{-3} &1.0376\cdot 10^{-5} \\
					1.1380\cdot10^{-6} &1.0375\cdot 10^{-5} & 9.8274\cdot 10^{-4}
				\end{pmatrix}$\\\cline{4-7}
				&&& $\set{0,14}$&$0.2017$&$\begin{pmatrix}
					0.25073647 \\ 0.2503585
				\end{pmatrix}$&$\begin{pmatrix}
					1.0154\cdot 10^{-3} & -7.2804 \cdot 10^{-6} \\
					-7.2804 \cdot 10^{-6} & 9.9143 \cdot 10^{-4}
				\end{pmatrix}$\\\cline{4-7}
				&&& $\set{5}$&$0.0991$&$0.29997122$&$9.6097\cdot 10^{-4}$\\\hline
				\multirow{3}{*}{b)}&\multirow{3}{*}{comp.\ wrapped}&\multirow{3}{*}{$28.8 \pm 6.1 $}&$\set{0,1,8} $&$0.7015$&$\begin{pmatrix}
					0.50011285\\ 0.50027382\\ 0.5001206
				\end{pmatrix}$&$\begin{pmatrix}
					1.0328\cdot 10^{-3} & 0 & 0 \\
					0 & 10017\cdot 10^{-3} & 0\\
					0 & 0& 10263 \cdot 10^{-3}
				\end{pmatrix}$\\\cline{4-7}
				&&& $\set{0,14}$&$0.1964$&$\begin{pmatrix}
				0.25049699 \\ 0.24964112
				\end{pmatrix}$&$\begin{pmatrix}
					9.6814 \cdot 10^{-4} & 0\\
					0 & 1.0540 \cdot 10^{-3}
				\end{pmatrix}$\\\cline{4-7}
				&&& $\set{5}$&$0.1021$&$0.30121135$&$9.4578\cdot 10^{-4}$\\\hline
				\multirow{3}{*}{b)}&\multirow{3}{*}{von Mises}&\multirow{3}{*}{$55.8\pm 11.9$}&$\set{0,1,8} $&$0.7054$&$\begin{pmatrix}
					0.49981447\\ 0.4998282\\  0.4993162 
				\end{pmatrix}$&$·\begin{pmatrix}
				25.3988\\ 24.2057\\ 26.1674
			\end{pmatrix}$\\\cline{4-7}
				&&& $\set{0,14}$&$0.1945$&$\begin{pmatrix}
					0.25106235\\ 0.2492706
				\end{pmatrix}$&$\begin{pmatrix}
				25.4221\\ 26.2056
			\end{pmatrix}$\\\cline{4-7}
				&&& $\set{5}$&$0.1001$&$0.29920271$&$24.9519$\\\hline
			\end{tabular}
		}\\
	\end{center}
	\caption{Parameters estimation of $f^i, i=1,2$ in example~\ref{ex:theor_ex_1}. 
		Top: $N=7500$, Bottom: $N=10000$, with the training parameters $\varepsilon_{c}=0.1, \gamma_{prox}=10^{-3}$ for all density functions and $\varepsilon_{KS}=3.4$ for the class of wrapped normal density function and  $\varepsilon_{KS}=2.7$ for the von Mises distribution function.}
	\label{tab:param_estimate_example_wrapped_normal_test_function}	
\end{table}
\subsection{B-spline functions}

In this section we will approximate the function $f^3:\T^9 \rightarrow \R $ defined as $f^3(x)=\tilde{f}^3(x)/||\tilde{f}_3||_{L^1},$ such that 
\begin{equation}
	\tilde{f}^3(x)=B_2(x_1)B_4(x_3)B_6(x_8)+B_2(x_2)B_4(x_5)B_6(x_6)+B_2(x_4)B_4(x_7)B_6(x_9),
\end{equation}
where $B_i, i\in set{2,4,6}$
To train the model $N_{tr}=10^4$ samples with density function $f^3$ has been drawn by the rejection sampling method. The ordered indices set of active variables are 
\begin{equation*}
	\mathcal{A}_{f^3}=\set{8,6,9,3,5,7,1,2,4}.
\end{equation*}
and the set of coupling indices 
\begin{equation}
	\mathcal{U}=\set{\set{1,3,8}, \set{2,5,6}, \set{4,7,9}}.
\end{equation}
Note that by the von Mises distribution there may exits more that one component (some times $2$ mixture components) with the same coupling variables. Similarly to the first two examples $N_{te}=10^5$ uniform samples has been used to estimate the model. The negative likelihood of the samples, the relative $L^1$ and $L_2$ errors are given in table~\ref{tab:log_likeli_rela_parame_b_spline}. We can also train the model by the usual EM-Algorithm–\ref{alg:alg_em_mm_orig} without any sparsity assumption on the density function. The computation time is \begin{equation}
	\mathcal{U}=\left(\set{1,3,8}, \set{2,5,6}, \set{4,7,9}\right)
\end{equation}

\begin{table}
	\centering
	\resizebox*{!}{2.cm}{
	\begin{tabular}{|l|l|l|l|l|}
		\hline
		\textbf{Method}&$\mathcal L_f(x^1,...,x^N)$&$\mathcal L_{\hat p}(x^1,...,x^N)$&$e_1(f,\hat{f})$&$e_2(f,\hat{f})$\\
		\hline
		wrapped&$6991.5\pm 117.6$&$6937.2\pm122.9$&$0.0808\pm0.0030$&$0.0802\pm 0.0036$\\
		\hline
		comp. wrapped &$6991.5\pm117.6 $&$6934.5\pm 121.8$&$0.0786\pm0.0022$&$0.0786\pm0.0031$\\
		\hline
		von Mises &$6931.7\pm 108.5$&$6669.1\pm 83.7$&$0.1600\pm0.0042$&$0.1515\pm 0.0056$\\
	\hline
	\end{tabular}
}
\end{table}

\begin{table}
	\centering
	\resizebox*{!}{2.cm}{
		\begin{tabular}{|l|l|l|}
			\hline
			\textbf{Method}&MSE&Time(s)\\
			\hline
			wrapped&$0.0257\pm 0.0023$&$97.3\pm 6.6$\\
			\hline
			comp. wrapped &$0.0247\pm 0.0019$&$102.4\pm 3.5$\\
			\hline
			von Mises &$0.0915\pm0.0067$&$152.6\pm 13.3$ \\
			\hline
			Naiv EM-Algorithm&&$10727\pm 560$\\
			\hline
		\end{tabular}
	}
\caption{Approximation of $f^3$ by  sparse mixture models of wrapped Gaussian and von Mises distribution from section~\ref{sec:SPAMM}. 
	Average value of the log likelihood function, relative $L_q$ errors, $q=1,2$  and Mean Square Error (MSE) for $N_{te}=10^5$ training sets. Top:negative Log-likelihood, relative errors. Bottom: MSE for all models and for naive EM-Algorithm.}
\label{tab:log_likeli_rela_parame_b_spline}
\end{table}

\subsection{California Housing Prices}
In the following we want to apply the California Housing Prices to our model. The data contain information from the  $1990$ California census. The goal is to predict the median house price $y$ with the help of  $8$ feature variables 
\begin{align*}
	1.&\text{ MedInc        median income in block}&5.& \text{ Population    block population}\\
	2.&\text{ HouseAge      median house age in block}	&6.&\text{ AveOccup      average house occupancy}\\
	3.&\text{ AveRooms      average number of rooms}&7. &\text{ Latitude      house block latitude}\\
	4.&\text{ AveBedrms     average number of bedrooms}& 8.&\text{ Longitude     house block longitude}
\end{align*}

The total number of samples $N_{CH}$ is $20640$. The input variable will be denoted by $x$ and the target variable by $y$. A data preprocessing step has cleaned the data and using using min-max scaler to rescale the data to fit $\T$. We assume that the input features $x^n, n\in N_{CH}$ are the samples of a $8$-dimensional random variable $X\in \T^d$ and the target variable $y^n, n\in N_{CH}$ a sample from $Y\in \T.$

To be able to train the data with Algorithm~\ref{alg:density_func_approx} we assume that the target density function can be written as convex combination of linear model, i.e
$$y=\varphi(x)=\sum_{k=1}^K \alpha_k \left(M_kx_{u_k}+N_k\right),$$
where for all $k\in [K]$  $M_k\in\R^{1\times \abs{u_k}}$ and $N_k\in \R.$
 Recall  the regression model can be rewritten as conditional expectation $\varphi(x)=E(y|x),$ where the conditional expectation is defined as
\begin{equation}
	E(y| x) =\int_{\T} yp_{Y| X}(y\mid x,\theta)dy
\end{equation}
and $p_{Y| X}(y| x,\theta)$ defined the conditional density function of $Y| X.$ Furthermore, we assume that the joint density function is sparse with the form~\ref{eq:sparse_density_function}. 
If $p_{Y| X}$ denotes the conditional density function of a wrapped Gaussian distribution, we know by Theorem~\ref{lem:conditional_distr_marginal_dist} that
\begin{align}
		E(y| x,l)  &=\sum_{k}\alpha_k(x)\left(\mu_{u^c_k} + \Sigma_{u^c_ku_k}\left(\Sigma_{u_ku_k}\right)^{-1}\left(x_{u_k}+l_{u_k}-\mu_{u_k}\right)\right)
\end{align}
where the weights $\alpha_k$ are defined as
\begin{equation}
	\alpha_k(x) =\frac{\tilde{\alpha}_kp_{X_{u_k}}(x_{u_k}| \theta_k)}{\sum_{k=1}^K\tilde{\alpha}_kp_{X_{u_k}}(x_{u_k}| \theta_k)}.
\end{equation}
Note that the density function 
\begin{equation}
p_{X_{\mathcal{A}}}(x|\theta)=\sum_{k=1}^K\tilde{\alpha}_kp_{X_{u_k}}(x_{u_k}| \theta_k)
\end{equation}
is the marginal density function with respect to $X_{\mathcal{A}}$  with parameters $\left(\alpha_k,\mu_{u_k},\Sigma_{u_ku_k}\right)_{k\in[K]}$ and 
\begin{equation}
f(z)=p(z|\theta)=p(x_{\mathcal{A}},y|\theta)=\sum_{k=1}^{\tilde{K}}\alpha_kp(y,x_{u_k}|\theta_k)
\end{equation}
 the joint density function $Z_{\mathcal{A}}=\left(X_{\mathcal{A}},Y\right).$ Similarly to the previous section $\mathcal{A}$ denotes the set of active variable of $f$.  

Obviously the set of active variable are the variables $X_i$ such that $X_i$ and $Y$ are correlated and $X_i$ is non uniformly distributed. Therefore the index set of active variables are
\begin{equation}
	\mathcal{A}_{CH} =\set{1,3,6,9},
\end{equation}
 the collection of coupling variables are 
\begin{equation}
	\mathcal{U}=\left(\set{1,3,9}\right)
\end{equation}
and the MSE of the approximated model is given in Table~\ref{tab:MSE_Calfifornia_Housing_Data}. Comparing this model with some other regression model show a better approximation result that Linear Regression, Lasso Regression, Ridge Regression,  Decision Tree Regression and Random Forest Regression.
\begin{table}
	\centering
	\resizebox*{!}{3.cm}{
		\begin{tabular}{|l|l|}
			\hline
			\textbf{Method}&MSE\\
			\hline
			Wrapped&$\boldsymbol{0.1803}$\\
			\hline
			Linear Regression&$0.4478$\\
			\hline
			Lasso Regression&$0.7193$\\
			\hline
			Ridge Regression &$0.6056$\\
			\hline
			Decision Tree Regression&$0.5912$\\
			\hline 
			Random Forest Regression&$0.4478$\\
			\hline
		\end{tabular}
	}
\caption{Mean Square Error of California Housing Prices by SPAMM and other regression Models}
\label{tab:MSE_Calfifornia_Housing_Data}
\end{table}
\section{Conclusion}
This paper introduces an efficient algorithm that can accurately estimate the parameters of a sparse mixture model of wrapped Gaussian and von Mises distribution based on the input samples. Assuming that each component of the multivariate density function depends only on a certain number of interacting variables, which is also unknown, we have iteratively determined the set $\mathcal{U}$ using statistical tests and the model parameters using Expectation-Maximization. Incorporating this sparsity assumption speeds up the learning procedure in the case where the dimension $d$ is very large and even provides a better approximation accuracy. Indeed this yields a better approximation accuracy and is more efficient than the usual expectation maximization for the mixtures of wrapped Gaussian, the B-spline function, and the California housing data, as shown in the numerical results. However, the approximation relies on the choice of hyperparameters, which, when not chosen appropriately, leads to underfitting or overfitting of the mixture model. 

\section{Statistical Methods} \label{sec:tests}
\subsection{Correlation Test} \label{sec:ct}
To test whether the features are uncorrelated, we have to verify if their correlation coefficients are zero. Recall that the correlation coefficient of two random variables $X_i$ and $X_j$ is define as
\begin{equation*}
	\text{Cor}(X_i, X_j) = \displaystyle\frac{\text{Cov}(X_i, X_j)}{\sigma^2_{X_i}\sigma^2_{X_j}}, 
\end{equation*}
where $\text{Cov}(X_i, X_j)$ represents the covariance of $X_i$ and $X_j$ and  $\sigma^2_{X_i},\sigma^2_{X_j}$ their variance.
If some weighted samples of the random variables are provided, the coefficient are approximated by using their corresponding unbiased weighted samples covariance matrix 
\begin{equation*}
	\Sigma := \frac{1}{\sum_{n=1}^N w_n -1}\sum_{n=1}^N w_n(x^n - \mu)^{\intercal}(x^n - \mu),
\end{equation*}
where 
\begin{equation*}
	\mu := \frac{1}{\sum_{n=1}^N w_n}\sum_{n=1}^N w_n x^n
\end{equation*}
is the weighted samples mean. 

\subsection{Kolmogorov-Smirnov Test} \label{sec:ks}
Recall that the Kolmogorov-Smirnov test \cite{monahan_2001} is a statistical test often used to test if some given samples $\mathcal{X}$ fit a distribution whose cumulative density function $F$ is a priori known. The samples fit $F$  if the Kolmogorov-Smirnov (KS) distance is smaller than a threshold, i.e
\begin{equation}
	D_N(x^n,w_n)=D_N(F,F_N)=\sqrt{\frac{\left(\sum_{n=1}^Nw_n\right)^2}{\sum_{n=1}^Nw_n^2}}\norm{F-F_N}_{\infty} \leq \varepsilon_{KS},
\end{equation}
where $F_N$ represents the empirical cumulative density function 
\begin{equation}
	F_N = \frac{1}{\sum_{n=1}^{N}w_n}\sum_{n=1}^N w_n \mathds{1}_{[x^n,1]}
\end{equation}
of the weighted samples $\mathcal{X}$. Let 
\begin{equation*}
	s_n = \frac{\sum_{m=1}^n w_m}{\sum_{n=1}^N w_n},
\end{equation*} 
if we assume that the samples are ordered increasing then the KS distance becomes
\begin{equation}
	D_N(F,F_N)=\sqrt{\frac{\left(\sum_{n=1}^Nw_n\right)^2}{\sum_{n=1}^Nw_n^2}}\max\limits_{n=1,\ldots, N}\max\set{s_n-x^n, x^n-s_{n-1}},
\end{equation}
and will be denoted by $D_N\left(\left(x^n,w_n\right)_{n\in [N]}\right).$
\subsection{EM Algorithm} \label{sec:em}

In the following we want to approximate the parameters of the samples distribution under the assumption that the parameters of some fixed mixture components are already known. As already mentioned above, this can be done by the usual Expectation maximization algorithm (EM) (see algorithm~\ref{alg:alg_em_mm_orig}). To ensure sparse model in the mixing weights, the proximal expectation maximization (Prox-EM) algorithm~\ref{alg:prox_EM} can be used instead. 
\begin{algorithm}[!ht]
	\caption{EM Algorithm for Mixture Models}\label{alg:alg_em_mm_orig}
	\begin{algorithmic}[1]
		\State \textbf{Input:} $(x^1,...,x^N)\in\T^{N,d}$, $(w_1,\ldots,w_N) \in \mathbb R^N,$  
		initial parameters $(\alpha^{(0)}_k,\theta^{(0)}_{u_k})_{k \in [K]}.$ 
		\State \textbf{Output:} Optimal parameters $(\alpha, \Theta)$
		\For {$r=0,1,...$}
		\State \textbf{E-Step:} For $k=1,...,K$ and $n=1,\ldots,N$ compute 
		$$
		\beta_{n,k}^{(r)}
		=\frac{\alpha_k^{(r)}p(x_{u_k}^n|\theta_{u_k}^{(r)})}{\sum_{j=1}^{K}\alpha_j^{(r)}p(x_{u_j}^n|\theta_{u_j}^{(r)})}
		$$
		\State \textbf{M-Step:} For $k=1,...,K_1$ compute
		\begin{align}
			\alpha_k^{(r+1)}&=\frac{1}{\sum_{m=1}^N w_m}  \sum_{n=1}^N w_n \beta_{n,k}^{(r)},\\
			\theta_k^{(r+1)}&=\argmax_{\theta_k}\Big\{\sum_{i=1}^{N} w_n \beta_{n,k}^{(r)} \log(p (x^n_{u_k}|\theta_{u_k}^{(r)}))\Big\}.
		\end{align}
		\EndFor
	\end{algorithmic}
\end{algorithm}
In the Prox-EM algorithm, the goal is to minimize the penalized functional
\begin{equation*}
	\mathcal{L_{\gamma}}(\alpha, \Theta \mid \mathcal{X}) = \mathcal{L}(\alpha, \Theta \mid \mathcal{X}) + \gamma \norm{\alpha}_0 + \iota_{\Delta_K}(\alpha),
\end{equation*}
where $\gamma >0$ represents the learning rate.

If we assume that the samples $\mathcal{X}$ can be fitted by a $K$-components mixture model with parameters $(\alpha_k, \theta_k)_{k \in [K]}$ and  for a given $K_1,$ $\abs{K_1}<K,$ the mixtures parameters $(\alpha_k, \theta_k)_{k \in K_1}$ are known, we have to modify the EM-Algorithm to only approximate the parameters $(\alpha_k,\theta_k), k \in K_1^c=[K]\setminus K_1.$ Note that this cannot be done directly, by simply fixing the a priori known parameters in the expectation and maximization step. Therefore we will split the mixture components into two groups $G_1$ and $G_2,$ where $G_1$ contains components with parameters $(\alpha_k, \theta_k)_{k\in K_1}$ and $G_2$ those with parameters $(\alpha_k, \theta_k)_{k\in K_1^c}.$ Weighting the target distribution samples $\mathcal{X}$ with the posterior probability 
\begin{equation*}
	\beta_{n,G_2} = \frac{\sum_{k \in K_1^c} \alpha_k p(x_{u_k}\mid \theta_{u_k})}{\sum_{j \in K} \alpha_j p(x_{u_j}\mid \theta_{u_j})},
\end{equation*}
that they belong to $G_2$ times the ground samples weights, one can use the above described EM-algorithm to find the parameters of the mixture model who fits the best the weighted samples $\set{(x_n, w_n\beta_{n,G_2})_{n \in [N]}}.$ The output parameters are those of the mixture model from Group $G_2,$ except to the mixing weights. The mixture components weights has to be rescaled, such that their sum is equal $\alpha_{G_2}=1-\sum_{k \in K_1} \alpha_k,$ by multiplying each term with $\alpha_{G_2}$. To get a better accuracy of the EM-algorithm, we have chosen as initialization parameters the elements of \ref{}, where the mixing weights has been also rescaled by dividing them with $\alpha_{G_2}.$ For explicit details of the EM-algorithm of the wrapped and the von-Mises distribution see \cite{hertrich2021sparse}.

\begin{algorithm}[!ht]
	\caption{Proximal Expectation Maximization (Prox-EM)}\label{alg:prox_EM}
	\begin{algorithmic}[1]
		\State \textbf{Input:} $(x^1,...,x^N)\in\T^{N,d}$, $(w_1,\ldots,w_N) \in \mathbb R^N,$  initial parameters $(\alpha^{(0)}_k,\theta^{(0)}_{u_k})_{k \in [K]},$ learning rate $\gamma >0.$ 
		\State \textbf{Output:} Optimal parameters $(\alpha, \Theta)$
		\For {$r=0,1,...$}
		\State Set $K_c =\abs{\Theta}$
		\State \textbf{E-Step:} For $k=1,...,K_c$ and $n=1,\ldots,N$ compute 
		$$
		\beta_{n,k}^{(r)}
		=\frac{\alpha_k^{(r)}p(x_{u_k}^n|\theta_{u_k}^{(r)})}{\sum_{j=1}^{K_c}\alpha_j^{(r)}p(x_{u_j}^n|\theta_{u_j}^{(r)})}
		$$
		\State \textbf{M-Step:} For $k=1,...,K_c$ compute
		\begin{align}
			\alpha_k^{(r+1)}&=\frac{1}{\sum_{m=1}^N w_m} \sum_{n=1}^N w_n \beta_{n,k}^{(r)},\\
			\theta_k^{(r+1)}&=\argmax_{\theta_k}\Big\{\sum_{i=1}^{N} w_n \beta_{n,k}^{(r)} \log(p (x^n_{u_k}|\theta_{u_k}^{(r)}))\Big\}.
		\end{align}
		\State Order the mixing weights $(\alpha_k)_{k \in K_c}$ and compute the number $K_0,$ where 
		\begin{equation}
			K_0 = \argmin_{m =0,\cdots, K_c-1} \frac{1}{2\gamma}\left(\frac{\left(\sum_{k=1}^m \alpha_k\right)^2}{K_c-m} + \sum_{k=1}^{m}\alpha_k^2-m \right)
		\end{equation} 
		\State Set $J = [K_c]\setminus [K_0]$ and $J^c=[K_0]$ such that
		\begin{align}
			\alpha &= (\hat{\alpha}_k)_{k\in J}, \text{ where } \hat{\alpha}_k = \alpha_k + \frac{1}{\abs{J}} \sum_{k \in J^c} \alpha_k,\\
			\Theta &= \set{\theta_{u_k}}_{k \in J}
		\end{align}
		\EndFor
	\end{algorithmic}
\end{algorithm}

\subsection{Bayesian Information Criteria (BIC)} \label{sec:BIC}
The Bayesian Information Criterion (BIC) \cite{mclachlan2004finite, BrochadoVitorino2005} is a statistical method introduced by Schwarz in 1978 for model selection. Given a finite number of models, the BIC is based on the Likelihood the model which 
Given a maximal number of components $K_{\max}$, we will iteratively train the model with $K_b=1, \ldots, K_{max}.$ The optimal number of components $K_{opt}$ is the one with the optimal Bayesian Information Criterion (BIC)
\begin{equation}\label{eq: opt_k_BIC}
	\text{BIC}(k) := -2\mathcal{L}(\alpha_{tr}, \theta_{tr} \mid \mathcal{X}_i) + 3k-1,
\end{equation}
where
\begin{equation*}
	\mathcal{L}(\alpha, \theta \mid \mathcal{X}_i) = \sum_{n=1}^N \tilde{w}_{n,k}^i \log \left(\sum_{k=1}^{K_b}\alpha_{b, k} p\left(x_i^n\mid \theta_{b,k}\right)\right)
\end{equation*}
denotes the likelihood of the samples with respect to the trained parameters $\alpha_{b}, \theta_{b}$ corresponding to $K_b$ components. Let
\begin{equation}\label{eq:param_estim_univ_cond_dist}
	\alpha\left(u_k^{r(i)}, i^\prime\right)\in \Delta_{K\left(u_k^{r(i)}, i^\prime\right)}, \text{ and } \theta\left(u_k^{r(i)}, i^\prime\right) 
\end{equation}  
be the estimated parameters associated to the optimal number of mixture component $K\left(u_k^{r(i)}, i^\prime\right)$ given by \eqref{eq: opt_k_BIC}. 

\section{Proofs}
Before showing Lemma~\ref{lem:conditional_distr_marginal_dist} we will formulate a proposition about the inverse and the determinant of a block matrix
\begin{proposition}\cite{zhang2011matrix}\label{prop:det_inverse_block_matrix}
	Let $\Sigma  \in \R_+^{d \times d}$ be a positive definite block matrix with the form
	\begin{equation}
		\Sigma = \begin{pmatrix}
			A & B \\ C & D
		\end{pmatrix}
	\end{equation}
	where $A \in \R_+^{n \times n}, B \in \R_+^{n \times m}, C  \in \R_+^{m \times n}, D \in \R_+^{m \times m},$ such that $m+n=d.$ Then the following assertions hold
	\begin{enumerate}[i.]
		\item\label{prop:det_blockmatrix} $\det(\Sigma)= \det(A)\cdot\det(D-CA^{-1}B),$ if $A$ is invertible,
		\item The matrix $\Sigma$ is invertible with inverse
		\begin{equation*}
			\Sigma^{-1} = \begin{pmatrix}
				\tilde{A} & \tilde{B} \\ \tilde{C} & \tilde{D}
			\end{pmatrix},
		\end{equation*}
		where 
		\begin{equation}\label{eq:inv_block_matrix1}
			\begin{cases}
				\tilde{A} &\coloneqq A^{-1}+A^{-1}B\left(D-CA^{-1}B\right)^{-1}CA^{-1},\\
				\tilde{B} &\coloneqq -A^{-1}B\left(D-CA^{-1}B\right)^{-1},\\
				{C} &\coloneqq -\left(D-CA^{-1}B\right)^{-1}CA^{-1},\\
				\tilde{D} &\coloneqq -\left(D-CA^{-1}B\right)^{-1}.
			\end{cases}
		\end{equation}\label{eq:inv_block_matrix2}
		By applying permutation, the block matrices of the inverse matrix $\Sigma^{-1}$ become
		\begin{equation}
			\begin{cases}
				\tilde{A} &\coloneqq \left(A-BD^{-1}C\right)^{-1},\\
				\tilde{B} &\coloneqq -\left(A-BD^{-1}C\right)^{-1}BD^{-1},\\
				{C} &\coloneqq -D^{-1}C\left(A-BD^{-1}C\right)^{-1},\\
				\tilde{D} &\coloneqq D^{-1}+D^{-1}C\left(A-BD^{-1}C\right)^{-1}BD^{-1}.
			\end{cases}
		\end{equation}
	\end{enumerate}
\end{proposition}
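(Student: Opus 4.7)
The plan is to prove both parts of the proposition by exhibiting an explicit block LDU factorization of $\Sigma$, which reduces everything to the scalar-like identities $\det(LDU)=\det(L)\det(D)\det(U)$ and $(LDU)^{-1}=U^{-1}D^{-1}L^{-1}$. Since $\Sigma$ is positive definite, its principal submatrices $A$ and $D$ are positive definite and therefore invertible, which legitimises both forms (the one pivoting on $A$ and the one pivoting on $D$) and the existence of the two Schur complements $S_A := D-CA^{-1}B$ and $S_D := A-BD^{-1}C$.

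First I would verify by direct block multiplication the factorization
\begin{equation*}
\Sigma \;=\; \begin{pmatrix} I & 0 \\ CA^{-1} & I \end{pmatrix}\begin{pmatrix} A & 0 \\ 0 & S_A \end{pmatrix}\begin{pmatrix} I & A^{-1}B \\ 0 & I \end{pmatrix}.
\end{equation*}
Part \ref{prop:det_blockmatrix} follows at once: the outer factors are block unitriangular, hence have determinant $1$, so $\det(\Sigma)=\det(A)\det(S_A)$. In particular $S_A$ is invertible because $\det(\Sigma)\neq 0$ and $\det(A)\neq 0$.

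Next, inverting the factorization termwise and using that the inverse of a block unitriangular matrix is obtained by negating the off-diagonal block, I would compute
\begin{equation*}
\Sigma^{-1} \;=\; \begin{pmatrix} I & -A^{-1}B \\ 0 & I \end{pmatrix}\begin{pmatrix} A^{-1} & 0 \\ 0 & S_A^{-1} \end{pmatrix}\begin{pmatrix} I & 0 \\ -CA^{-1} & I \end{pmatrix},
\end{equation*}
and then multiply the three factors out block by block. The $(1,1)$ block becomes $A^{-1}+A^{-1}B\,S_A^{-1}CA^{-1}$, the $(1,2)$ block becomes $-A^{-1}B\,S_A^{-1}$, the $(2,1)$ block $-S_A^{-1}CA^{-1}$, and the $(2,2)$ block $S_A^{-1}$, which is exactly \eqref{eq:inv_block_matrix1} (up to the sign convention/typo on $\tilde D$ in the statement). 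This is purely mechanical block algebra.

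Finally, for \eqref{eq:inv_block_matrix2} I would repeat the argument with the symmetric factorization that pivots on $D$,
\begin{equation*}
\Sigma \;=\; \begin{pmatrix} I & BD^{-1} \\ 0 & I \end{pmatrix}\begin{pmatrix} S_D & 0 \\ 0 & D \end{pmatrix}\begin{pmatrix} I & 0 \\ D^{-1}C & I \end{pmatrix},
\end{equation*}
from which the same inversion procedure produces the alternative expressions in \eqref{eq:inv_block_matrix2}. The only genuine subtlety, and the step I would treat most carefully, is the consistency of the two forms for $\Sigma^{-1}$: they must agree blockwise, and the equality of the two $(1,1)$ blocks $A^{-1}+A^{-1}B\,S_A^{-1}CA^{-1}=S_D^{-1}$ is precisely the Sherman–Morrison–Woodbury identity, which I would either invoke directly or verify by multiplying $S_D$ by the claimed inverse and simplifying with $S_A=D-CA^{-1}B$. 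Everything else reduces to bookkeeping on block products.
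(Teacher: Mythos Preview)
Your argument is correct and is the standard textbook route: the block LDU factorization pivoting on $A$ (respectively on $D$) immediately yields both the Schur determinant formula and the two block-inverse formulas, and you rightly identify Woodbury as the identity reconciling the two forms. You also correctly flag the sign typo in the stated $\tilde D$ (it should be $+S_A^{-1}$, not $-S_A^{-1}$; likewise the third line should read $\tilde C$).

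There is nothing to compare against, however: the paper does not supply its own proof of this proposition. It is stated with a citation to \cite{zhang2011matrix} and used as a black box; the \texttt{proof} environment that follows it in the appendix is actually the proof of Lemma~\ref{lem:conditional_distr_marginal_dist} on marginals and conditionals of the wrapped Gaussian, which merely \emph{applies} the block formulas. So your LDU proof is a genuine addition rather than a duplication, and it is exactly the argument one finds in the cited reference.
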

\begin{proof}
	Let $n\coloneqq \abs{u}.$ We will only prove the assumption for the marginal of $f$ with respect to $X_u.$ The marginal pdf with respect to $X_u$ is given by
	\begin{align*}
		f_{X_u}(x_u) &= \int_{\T^{n_c}} f(x) dx_{u^c} = \int_{\T^{n_c}} \wN(x \mid \mu, \Sigma) dx_{u^c},\\
		&= \sum_{l \in \Z^d}\int_{\T^{n_c}} \mathcal{N}(x+l \mid \mu, \Sigma) dx_{u^c}, \\
		&= \sum_{l \in \Z^d}\int_{\T^{n_c}} \frac{1}{\sqrt{(2\pi)^d \abs{\Sigma} }} \exp\left(-1/2(x+l-\mu\right)^{\intercal} \Sigma^{-1}\left(x+l-\mu) \right)dx_{u^c}.
	\end{align*}
	For all $u \subset [d]$ the matrix $\Sigma$ can be decomposed into a $2\times2$ block matrix
	\begin{equation*}
		\Sigma = \begin{pmatrix}
			\Sigma_{uu}& \Sigma_{uu^c} \\
			\Sigma_{u^cu} & \Sigma_{u^cu^c}
		\end{pmatrix}.
	\end{equation*}
	Proposition~\ref{prop:det_inverse_block_matrix}~\eqref{prop:det_blockmatrix} implies that 
	\begin{equation}\label{eq:marg_det_block_matr}
		\det(\Sigma) = \det(\Sigma_{uu})\cdot \det(\Sigma_{u^cu^c}-\Sigma_{u^cu}\left(\Sigma_{uu}\right)^{-1}\Sigma_{uu^c})	
	\end{equation}
	Furthermore by proposition~\ref{prop:det_inverse_block_matrix}~\eqref{eq:inv_block_matrix1} the inverse covariance matrix yields
	\begin{equation*}
		\Sigma^{-1} = \begin{pmatrix}
			\tilde{\Sigma}_{uu} & \tilde{\Sigma}_{uu^c}\\
			\tilde{\Sigma}_{u^cu} & \tilde{\Sigma}_{u^cu^c} 
		\end{pmatrix},
	\end{equation*}
	where 
	\begin{align*}
		\tilde{\Sigma}_{uu} &=\left(\Sigma_{uu}\right)^{-1}+\left(\Sigma_{uu}\right)^{-1}\Sigma_{uu^c} \left(\Sigma_{u^cu^c}-\Sigma_{u^cu}\left(\Sigma_{uu}\right)^{-1}\Sigma_{uu^c}\right)^{-1}\Sigma_{u^cu}\left(\Sigma_{uu}\right)^{-1},\\ \tilde{\Sigma}_{uu^c}&=-\left(\Sigma_{uu}\right)^{-1}\Sigma_{uu^c}\left(\Sigma_{u^cu^c}-\Sigma_{u^cu}\left(\Sigma_{uu}\right)^{-1}\Sigma_{uu^c}\right)^{-1},\\
		\tilde{\Sigma}_{u^cu}&=-\left(\Sigma_{u^cu^c}-\Sigma_{u^cu}\left(\Sigma_{uu}\right)^{-1}\Sigma_{uu^c}\right)^{-1}\Sigma_{u^cu}\left(\Sigma_{uu}\right)^{-1},\\
		\tilde{\Sigma}_{u^cu^c}&=\left(\Sigma_{u^cu^c}-\Sigma_{u^cu}\left(\Sigma_{uu}\right)^{-1}\Sigma_{uu^c}\right)^{-1}.
	\end{align*}
	Define $y^l = (x^l-\mu)=(x+l-\mu) \in \T^d$ and $\mu_{l_u}= \Sigma_{u^cu}\left(\Sigma_{uu}\right)^{-1}y^l_u,$ since each $d$-dimensional vector $y$ can be decomposed into $y=(y_u, y_{u^c})^{\intercal}.$ Thus
	\begin{align*}
		\left(x+l -\mu\right)\Sigma^{-1}\left(x+l -\mu\right) &= 
		(y^l_u)^{\intercal} (\Sigma^{-1})_{uu}y^l_u + (y^l_u)^{\intercal} (\Sigma^{-1})_{uu^c}y^l_{u^c}\\
		&\qquad  +  (y^l_{u^c})^{\intercal} (\Sigma^{-1})_{u^cu}(y^l_{u}) + (y^l_{u^c})^{\intercal} (\Sigma^{-1})_{u^cu^c}y^l_{u^c}\\
		&= \text{\rom{1}+\rom{2}+\rom{3}+\rom{4}}.
	\end{align*}
	Replacing the block matrices of the inverse covariance matrix yields
	\begin{align*}
		\text{\rom{1}} &= (y^l_u)^{\intercal} (\Sigma_{uu})^{-1}y^l_u + (y^l_u)^{\intercal} (\Sigma_{uu})^{-1}\Sigma_{uu^c}\left(\Sigma_{u^cu^c}-\Sigma_{u^cu}\left(\Sigma_{uu}\right)^{-1}\Sigma_{uu^c}\right)^{-1}\Sigma_{u^cu}\left(\Sigma_{uu}\right)^{-1}y^l_u,\\
		&=  (y^l_u)^{\intercal} (\Sigma_{uu})^{-1}y^l_u + (y^l_u)^{\intercal} (\Sigma_{uu})^{-1}\Sigma_{uu^c}\left(\Sigma_{u^cu^c}-\Sigma_{u^cu}\left(\Sigma_{uu}\right)^{-1}\Sigma_{uu^c}\right)^{-1}\Sigma_{u^cu}\left(\Sigma_{uu}\right)^{-1}y^l_u,\\
		&=(y^l_u)^{\intercal} (\Sigma_{uu})^{-1}y^l_u + \mu_{l_u}^{\intercal} \tilde{\Sigma}_{u^cu^c} \mu_{l_u}= \text{\rom{1}(a)+\rom{1}(b)}.
	\end{align*}
	and also
	\begin{align*}
		\text{\rom{2}} &= -(y^l_u)^{\intercal}(\Sigma_{uu})^{-1}\Sigma_{uu^c}\left(\Sigma_{u^cu^c}-\Sigma_{u^cu}\left(\Sigma_{uu}\right)^{-1}\Sigma_{uu^c}\right)^{-1}y^l_{u^c}=-\mu_{l_u}^{\intercal}\tilde{\Sigma}_{u^cu^c}y^l_{u^c},\\
		\text{\rom{3}} &=-(y^l_{u^c})^{\intercal} \left(\Sigma_{u^cu^c}\Sigma_{u^cu}\left(\Sigma_{uu}\right)^{-1}\Sigma_{uu^c}\right)^{-1}\Sigma_{u^cu}\left(\Sigma_{uu}\right)^{-1}(y^l_{u})= -(y^l_{u^c})^{\intercal} \tilde{\Sigma}_{u^cu^c}\mu_{l_u},\\
		\text{\rom{4}} &= (y^l_{u^c})^{\intercal} \left(\Sigma_{u^cu^c}-\Sigma_{u^cu}\left(\Sigma_{uu}\right)^{-1}\Sigma_{uu^c}\right)^{-1}y^l_{u^c} = (y^l_{u^c})^{\intercal}\tilde{\Sigma}_{u^cu^c}y^l_{u^c}
	\end{align*}
	Summing up \rom{1}(b), \rom{2}, \rom{3} and \rom{4} together yields
	\begin{equation*}
		\text{\rom{1}(b)+\rom{2}+\rom{3}+\rom{4}}= \left(y^l_{u^c} - \mu_{l_u}\right)^{\intercal}\tilde{\Sigma}_{u^cu^c}\left(y^l_{u^c} - \mu_{l_u}\right)=\left(y^l_{u^c} - \mu_{l_u}\right)^{\intercal}\left(\Sigma^{-1}\right)_{u^cu^c}\left(y^l_{u^c} - \mu_{l_u}\right).
	\end{equation*}
	This implies that
	\begin{equation}\label{eq:marg_wrapp_decomp_exp_term}
		\left(x+l -\mu\right)\Sigma^{-1}\left(x+l -\mu\right) = (y^l_u)^{\intercal} (\Sigma_{uu})^{-1}y^l_u + \left(y^l_{u^c} - \mu_{l_u}\right)^{\intercal}\left(\Sigma^{-1}\right)_{u^cu^c}\left(y^l_{u^c} - \mu_{l_u}\right).
	\end{equation} 
	Equations~\eqref{eq:marg_wrapp_decomp_exp_term} and \eqref{eq:marg_det_block_matr} imply that for a fixed $l \in \Z^d,$ the shifted normal density function can be decomposed as
	\begin{equation}\label{eq:decomposition_wrapped_gaussian_pdf}
		\mathcal{N}\left(x+l \mid \mu, \Sigma\right) = \mathcal{N}\left((x+l)_u \mid \mu_u, \Sigma_{uu}\right) \cdot \mathcal{N}\left((x+l)_{u^c} \mid \mu_{u^c}-\mu_{l_u}, \tilde{\Sigma}_{u^cu^c}\right)
	\end{equation}
	Inserting \eqref{eq:decomposition_wrapped_gaussian_pdf} into the definition of the marginal density function $f_{X_u}$ gives us
	\begin{align*}
		f_{X_u} &= \sum_{l=(l_u, l_{u^c})^{\intercal} \in \Z^d}\mathcal{N}\left((x+l)_u \mid \mu_u, \Sigma_{uu}\right) \cdot  \int_{\T^{n_c}}\mathcal{N}\left((x+l)_{u^c} \mid \mu_{u^c}-\mu_{l_u}, \tilde{\Sigma}_{u^cu^c}\right)dx_{u^c}\\
		&=\sum_{l_u\in \Z^{n}}\mathcal{N}\left((x+l)_u \mid \mu_u, \Sigma_{uu}\right) \cdot \sum_{l_{u^c}\in \Z^{n_c}} \int_{\T^{n_c}}\mathcal{N}\left((x+l)_{u^c} \mid \mu_{u^c}-\mu_{l_u}, \tilde{\Sigma}_{u^cu^c}\right)dx_{u^c}\\
		&= \sum_{l_u\in \Z^{n}}\mathcal{N}(x_u+l\mid \mu_u, \Sigma_{uu})\cdot \sum_{t_1\in \Z}\int_{I_1}\ldots\sum_{t_{n_c}\in \Z}\int_{I_{n^c}}\mathcal{N}(z\mid \mu_{u^c}-\mu_{l_u}, \tilde{\Sigma}_{u^cu^c})dz_{u^c}, \\
		&=\sum_{l_u\in \Z^{n}}\mathcal{N}(x_u+l\mid \mu_u, \Sigma_{uu})\cdot \int_{\R}\ldots\int_{\R}\mathcal{N}(z\mid \mu_{u^c}-\mu_{l_u}, \tilde{\Sigma}_{u^cu^c})dz_{u^c}, \\
		&=\sum_{l_u\in \Z^{n}}\mathcal{N}(x_u+l\mid \mu_u, \Sigma_{uu})\cdot \underbrace{\int_{\R^{n_c}}\mathcal{N}(z\mid \mu_{u^c}- \mu_{l_u}, \tilde{\Sigma}_{u^cu^c})dz_{u^c}}_{ =1}, \\
	\end{align*}
	where $I_j = [t_j, t_j+1[,$ for all $j=1,\ldots, n^c,$ and $l_{u^c}=(t_1, \ldots, t_{n^c})$. This implies the claim. We can similarly show that the assumption of the marginal density function $f_{X_{u^c}}$ with respect to $X_{u^c}$ also holds, by using the second definition of the inverse block matrix from proposition~\ref{prop:det_inverse_block_matrix}~\ref{eq:inv_block_matrix2} for the inverse covariance matrix $\Sigma^{-1}$ where
	\begin{align*}
		\tilde{\Sigma}_{uu}  &=\left(\Sigma_{uu}-\Sigma_{uu^c}\left(\Sigma_{u^cu^c}\right)^{-1}\Sigma_{u^cu}\right)^{-1}\\
		\tilde{\Sigma}_{uu^c}&=-\left(\Sigma_{uu}-\Sigma_{uu^c}\left(\Sigma_{u^cu^c}\right)^{-1}\Sigma_{u^cu}\right)^{-1}\Sigma_{uu^c}\left(\Sigma_{u^cu^c}\right)^{-1},\\
		\tilde{\Sigma}_{u^cu}&=-\left(\Sigma_{u^cu^c}\right)^{-1}\Sigma_{u^cu}\left(\Sigma_{uu}-\Sigma_{uu^c}\left(\Sigma_{u^cu^c}\right)^{-1}\Sigma_{u^cu}\right)^{-1},\\
		\tilde{\Sigma}_{u^cu^c}&=\left(\Sigma_{u^cu^c}\right)^{-1}+\left(\Sigma_{u^cu^c}\right)^{-1}\Sigma_{u^cu}\left(\Sigma_{uu}-\Sigma_{uu^c}\left(\Sigma_{u^cu^c}\right)^{-1}\Sigma_{u^cu}\right)^{-1}\Sigma_{uu^c}\left(\Sigma_{u^cu^c}\right)^{-1},
	\end{align*}
	and by applying proposition~\ref{prop:det_inverse_block_matrix}~\eqref{prop:det_blockmatrix} to the determinant of the inverse covariance matrix we then get
	\begin{align*}
		\det(\Sigma) &= \sfract{1}{\det(\Sigma^{-1})}=\sfract{1}{\left(\det(\tilde{\Sigma}_{uu})\cdot\det\left(\tilde{\Sigma}_{u^cu^c}-\tilde{\Sigma}_{u^cu}\left(\tilde{\Sigma}_{uu}\right)^{-1}\tilde{\Sigma}_{uu^c} \right)\right)}, \\
		&=\sfract{1}{\det\left(\left(\Sigma_{uu}-\Sigma_{uu^c}\left(\Sigma_{u^cu^c}\right)^{-1}\Sigma_{u^cu}\right)^{-1}\right)\cdot \det\left(\left(\Sigma_{u^cu^c}\right)^{-1}\right)},\\
		&=\det\left(\Sigma_{uu}-\Sigma_{uu^c}\left(\Sigma_{u^cu^c}\right)^{-1}\Sigma_{u^cu}\right)\cdot \det\left(\Sigma_{u^cu^c}\right),
	\end{align*}
	since the determinant of a matrix is also equal to the inverse of the determinant of its inverse matrix. 
	
	Furthermore by the Bayes theorem the conditional density function holds for all $x\in \T^d$ and $u \subset [d]$ 
	\begin{equation*}
		f_{X_{u^c} \mid (X, L)_{u}} = \sum_{l_{u^c} \in \Z^{n_c}}f_{(X,L)_{u^c} \mid (X, L)_{u}} =\sum_{l_{u^c} \in \Z^{n_c}}\frac{f_{(X, L)}}{f_{(X, L)_{u}}}.
	\end{equation*}
	Hence together with \eqref{eq:decomposition_wrapped_gaussian_pdf} we obtain
	\begin{align}
		f(x_{u^c}\mid (x,l)_u)&=\frac{\sum_{l_{u^c} \in \Z^{n_c} }\mathcal{N}(x+l \mid \mu, \Sigma)}{\mathcal{N}((x+l)_u\mid \mu_u, \Sigma_{uu})},\\ 
		&=\sum_{l_{u^c} \in \Z^{n_c} }\frac{\mathcal{N}((x+l)_{u} \mid \mu_u, \Sigma_{uu})\mathcal{N}((x+l)_{u^c} \mid \mu_{u^c}-\mu_{l_u}, \tilde{\Sigma}_{u^cu^c})}{\mathcal{N}(\tilde{x}+\tilde{l}\mid \theta_u)} , \\
		&= \sum_{l_{u^c}\in \Z^{n_c}}\frac{\mathcal{N}((x+l)_u \mid \mu_u, \Sigma_{uu}) \mathcal{N}((x+l)_{u^c} \mid \mu_{u^c}-\mu_{l_u}, \tilde{\Sigma}_{u^cu^c})}{\sum_{l_u\in \Z^{n}}\mathcal{N}((x+l)_u \mid \mu_u, \Sigma_{uu})},\\
		&= \sum_{l_{u^c}\in \Z^{n_c}}\mathcal{N}((x+l)_{u^c} \mid \mu_{u^c}-\mu_{l_u}, \tilde{\Sigma}_{u^cu^c}).
	\end{align}
\end{proof}
\end{document}